\newtheorem{definition}{Definition}
\newtheorem{theorem}{Theorem}
\newtheorem{theoreminformal}{Theorem}
\newtheorem{lemma}{Lemma}
\newcommand{\braket}[1]{\left\langle #1 \right\rangle}
\newcommand{\bra}[1]{\left\langle #1 \right|}
\newcommand{\ket}[1]{\left| #1 \right\rangle}
\newcommand{\tracenorm}[1]{\left\|#1\right\|_1}
\newcommand{\opnorm}[1]{\left\|#1\right\|}
\newcommand{\tr}{\operatorname{tr}}
\newcommand{\mA}{\mathcal{A}}
\newcommand{\mB}{\mathcal{B}}
\newcommand{\mC}{\mathcal{C}}
\newcommand{\mH}{\mathcal{H}}
\newcommand{\mI}{\mathcal{I}}
\newcommand{\mL}{\mathcal{L}}
\newcommand{\mR}{\mathcal{R}}
\newcommand{\mS}{\mathcal{S}}
\newcommand{\E}{\mathbb{E}}
\renewcommand{\P}{\mathbb{P}}
\begin{document}

\title{Stochastic Quantum Hamiltonian Descent}

\author{Sirui Peng}
\affiliation{State Key Lab of Processors, Institute of Computing Technology, Chinese Academy of Sciences, 100190, Beijing, China.}
\affiliation{School of Computer Science and Technology, University of Chinese Academy of Sciences, Beijing, 100049,  China.}

\author{Shengminjie Chen}
 \affiliation{State Key Lab of Processors, Institute of Computing Technology, Chinese Academy of Sciences, 100190, Beijing, China.}

 \author{Xiaoming Sun}
 \affiliation{State Key Lab of Processors, Institute of Computing Technology, Chinese Academy of Sciences, 100190, Beijing, China.}

 \author{Hongyi Zhou}
\email{zhouhongyi@ict.ac.cn}
 \affiliation{State Key Lab of Processors, Institute of Computing Technology, Chinese Academy of Sciences, 100190, Beijing, China.}

\date{\today}

\begin{abstract}
Stochastic Gradient Descent (SGD) and its variants underpin modern machine learning by enabling efficient optimization of large-scale models. However, their local search nature limits exploration in complex landscapes. In this paper, we introduce Stochastic Quantum Hamiltonian Descent (SQHD), a quantum optimization algorithm that integrates the computational efficiency of stochastic gradient methods with the global exploration power of quantum dynamics. We propose a Lindbladian dynamics as the quantum analogue of continuous-time SGD. We further propose a discrete-time gate-based algorithm that approximates these dynamics while avoiding direct Lindbladian simulation, enabling practical implementation on near-term quantum devices. We rigorously prove the convergence of SQHD for convex and smooth objectives. Numerical experiments demonstrate that SQHD also exhibits advantages in non-convex optimization. All these results highlight its potential for quantum-enhanced machine learning.
\end{abstract}

\maketitle

\section{Introduction}\label{sec:intro}
Stochastic Gradient Descent (SGD)~\cite{Robbins1951ASA} and its variants are the predominant optimization algorithms for large-scale machine learning.
In particular, SGD aims to minimize the objective function $f(x)=\frac{1}{m}\sum_{j=1}^m f_j(x)$ with access to individual functions $f_1(x),\ldots,f_m(x)$, where $m$ corresponds to the size of the training dataset. Unlike Gradient Descent (GD), which computes updates using the entire dataset, SGD leverages small random subsets of data to achieve superior computational efficiency. This stochastic approach provides significant advantages in modern data-intensive applications where full-batch computations are prohibitively expensive.

The gradient-based optimization and the dynamical system have deep connections \cite{Wibisono2016AVP,MZKY23,SSJ2023}. In the quantum domain, Quantum Hamiltonian Descent (QHD)~\cite{QHD} has emerged as a distinct optimization paradigm. By exploiting quantum tunneling effects, QHD enables non-local exploration of objective function landscapes, allowing it to traverse energy barriers and discover high-quality solutions inaccessible to classical gradient-based methods~\cite{QHD-separation-1,QHD-separation-2}. However, QHD's continuous-time evolution requires full dataset queries, making it computationally prohibitive for large-scale problems. Indeed, tackling the challenges of scalability and trainability in quantum optimization is an active area of research \cite{Abbas2024}, and there have been significant advances in the solution of critical problems, including Semidefinte Programming (SDP) \cite{QSDP-1,QSDP-2,QSDP-3,QSDP-4} , Linear Programming (LP) \cite{QLP-1,QLP-2,QLP-3,QLP-4,QLP-5} and more generalized continuous optimization problem \cite{QAL,QCONVEX-1,QNONCONVEX-1}.

The computational cost associated with QHD's requirement for full dataset access remains a significant barrier to its application in large-scale, data-intensive machine learning tasks, precisely where classical SGD excels by using efficient, stochastic mini-batch updates. This raises a fundamental question: \textit{Can we develop a quantum counterpart to SGD that combines the computational efficiency of stochastic sampling with the global search capabilities of quantum optimization?}
A key observation is that the iterative process of SGD can be modeled as a dynamical system subjected to stochastic forces originating from environmental interactions. This inspires us to construct an open quantum system that parallels the structure of SGD. Specifically, the random choice of individual functions in the gradient calculation of SGD corresponds to random potentials in an open quantum system. As a result, the open system is driven by stochastic forces during its evolution. The evolution of such a quantum system is described by the Lindblad master equation, where the coupling to the environment manifests itself as dissipation terms. The dissipation can be engineered to emulate the stochastic noise in SGD, guiding the system towards global minima. Simultaneously, the inherent quantum effects, including tunneling, enable non-local transitions that enhance global exploration.

In this work, we introduce \textit{Stochastic Quantum Hybrid Dynamics}, a continuous-time open quantum system that provably converges on convex and smooth objective functions. Based on this, we propose \textit{Stochastic Quantum Hamiltonian Descent} (SQHD), an efficient discrete-time quantum algorithm that approximates the continuous-time open system dynamics. This algorithm combines the global exploration capabilities of QHD with the computational efficiency of SGD. Our central contributions are formalized into two theorems: \textbf{Theorem~\ref{thm:order-2-conv-informal}} establishes the convergence of the continuous-time dynamics, and \textbf{Theorem~\ref{thm:order-2-approx-informal}} demonstrates that the discrete-time algorithm accurately approximates the continuous-time dynamics. Together, these theorems ensure the convergence of the SQHD algorithm. By exploiting the structured stochasticity of the continuous-time dynamics, our gate-based quantum algorithm circumvents the need for direct Lindblad simulations, making it suitable for near-term quantum devices. We complement our theoretical findings with numerical experiment demonstrations, showing that SQHD also exhibits advantages in non-convex optimization, positioning it as a promising approach within the quantum machine learning landscape.

\section{Result}\label{sec:result}
We first briefly overview our main contributions.
\begin{itemize}
    \item \textbf{Stochastic Quantum Hybrid Dynamics and its convergence analysis}: First, we propose Stochastic Quantum Hybrid Dynamics, a real-space quantum dynamical system depicted by the Lindblad master equation, and the system is guaranteed to evolve towards the global minimum on any convex objective function under mild assumptions.
    \item \textbf{Stochastic Quantum Hamiltonian Descent as an approximation}: Next, we propose a gate-based quantum algorithm, Stochastic Quantum Hamiltonian Descent, which well approximates the trajectory of the proposed dynamical system. The algorithm utilizes structured stochasticity in the dynamical system and avoids simulating the Lindblad dynamics directly.
    \item \textbf{Numerical results on diverse nonconvex landscapes}: We compare our algorithm with Stochastic Gradient Descent with Momentum (SGDM) and Quantum Hamiltonian Descent (QHD) on five non-convex objective functions with different features. The results demonstrate that our algorithm maintains the non-local exploration ability that helps to escape from local minima, even for objective functions with a large gradient noise.
\end{itemize}

More technical details are covered in Appendix, including detailed problem settings and assumptions, formal statements and proofs of the convergence result and the approximation result. The complete information on the numerical experiment can also be found in Appendix~\ref{sec:full-numerical}.

\begin{figure*}
    \centering
    \includegraphics[width=\textwidth]{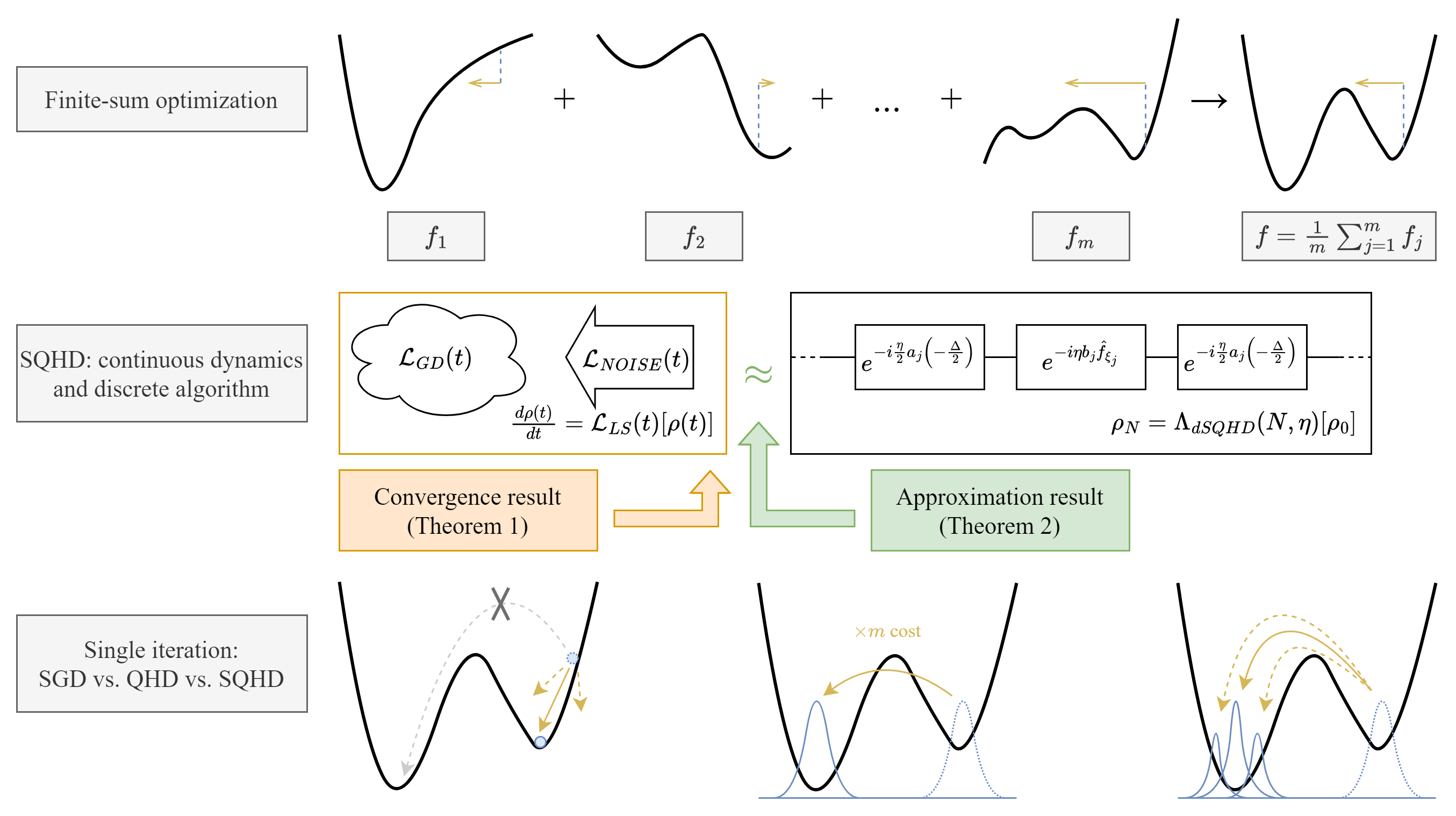}
    \caption{Overview of the SQHD method. (Top) The SQHD method is designed for unconstrained finite-sum optimization problems, where queries to individual objective functions $f_1,\ldots,f_m$ (rather than the full objective function $f$) are available. (Middle) The abbreviation carries a dual meaning: Stochastic Quantum Hybrid Dynamics, a quantum dynamical system that is guaranteed to converge on convex and smooth objective functions, and Stochastic Quantum Hamiltonian Descent, a tailored gate-based quantum algorithm that is guaranteed to simulate the dynamical system efficiently. To avoid ambiguity, in the rest of the paper we use SQHD exclusively for the quantum algorithm. (Bottom) It can be prohibited to escape from local minima in SGD while the QHD method requires multiple queries to the objective function. Our method combines the best of both worlds: the solution quality of QHD and the computational efficiency of SGD, and we demonstrate this argument both theoretically and numerically.}
    \label{fig:SQHD-schema}
\end{figure*}

\subsection{Convergence of Stochastic Quantum Hybrid Dynamics}

The gradient descent process can be conceptualized as the trajectory of a particle in a potential energy landscape specified by the objective function. This inherent connection between gradient descent methods and dynamical systems provides a rich theoretical framework \cite{Wibisono2016AVP}, offering insights into the behavior of gradient-based optimization techniques. Such a correspondence also extends to closed quantum dynamical systems \cite{QHD}. One of our key contributions is expanding upon this established correspondence by drawing a novel parallel between stochastic gradient methods and open quantum systems.

The key challenge is to characterize the effect of using stochastic gradients. In our quantum setting, the query to the total objective function $f$ corresponds to evolving the quantum state $\rho$ with the phase operator $e^{-i\eta \hat{f}}$ where $\hat f=\int_x f(x)\ket{x}\bra{x}\d x$. The stochastic gradient update, however, involves applying a randomly selected operator, $e^{-i\eta \hat{f}_\xi}$, where $\hat{f}_\xi$ is chosen uniformly from a set of components $\{\hat{f}_1, \ldots, \hat{f}_m\}$. The key insight comes from comparing the Taylor series expansions of these two types of evolution. To the first order, both methods produce the same evolution, $-i\eta [\hat{f},\rho]$, meaning the stochastic approach, on average, steers the system in the same direction as the deterministic approach. The difference---and the characteristic effect of stochasticity---appears in the second-order term. By isolating the difference between the second-order terms of the full and the averaged stochastic evolutions, we can precisely quantify the noise introduced by the stochastic process.
\begin{align*}
\text{Stochastic Noise} \approx \frac{\eta^2}{2}\left([\hat{f},[\hat{f},\rho]] - \frac{1}{m}\sum_{j=1}^m [\hat{f}_j,[\hat{f}_j,\rho]]\right).
\end{align*}
This expression captures the distinctive diffusive effect that arises from using random components rather than the full operator.

Based on this analysis, we propose the following dynamical system to model the continuous-time evolution of the stochastic approach:

\begin{definition}[Stochastic Quantum Hybrid Dynamics]
The evolution of the mixed quantum state $\rho(t)$ is described by:
\begin{align}\label{eqn:LME-for-SQHD}
\frac{\d \rho(t)}{\d t} = \mL_{LS}(t)[\rho(t)],    
\end{align}
where the generator $\mL_{LS}$ is composed of two distinct parts:
\begin{enumerate}
\item \textbf{A Gradient Descent Term ($\mL_{GD}$):} This term drives the system along the desired optimization path according to the standard Schroedinger equation, governed by the Hamiltonian $H(t)=e^{\psi(t)}\left(-\frac{1}{2}\Delta\right)+e^{\chi(t)}\hat{f}$, where $\Delta=\frac{\partial^2}{\partial x_1^2}+\frac{\partial^2}{\partial x_2^2}+\cdots+\frac{\partial^2}{\partial x_d^2}$ is the Laplace operator. This represents the ideal, noiseless component of the dynamics,
\begin{align*}
  \mL_{GD}(t)[\sigma] = -i[H(t),\sigma].    
\end{align*}
\item\textbf{A Stochastic Noise Term ($\mL_{NOISE}$):} This term explicitly models the diffusive effects introduced by the stochastic updates, as derived in our analysis above.
\begin{align*}
  \mL_{NOISE}(t)[\sigma] = \frac{e^{2\chi(t)}}{2}\left([\hat{f},[\hat{f},\rho]]-\frac{1}{m}\sum_{j=1}^m[\hat{f}_j,[\hat{f}_j,\rho]]\right).
\end{align*}
\end{enumerate}
The overall dynamics are then a weighted combination of these two effects, modulated by learning rate $\eta$ and learning rate schedule $u(t):[0,T]\to[0,1]$:
\begin{align*}
\mL_{LS}(t) = u(t)\mL_{GD}(t) + u(t)^2\eta\mL_{NOISE}(t).
\end{align*}
The corresponding evolution channel for time period $[0,T]$ is $\Lambda_{LS}(u,0,T)$. When $u(t)=1,t\in[0,T]$, the channel is simply denoted as $\Lambda_{LS}(0,T)$.
\end{definition}
This formulation provides a clear and accurate model of the system's trajectory, capturing both the intended descent and the characteristic impact of the underlying stochastic process. The process described above is a valid quantum channel, since Eq.~\eqref{eqn:LME-for-SQHD} is a Lindblad master equation.

Stochastic Quantum Hybrid Dynamics is guaranteed to converge on smooth and convex objective functions, for any smooth initial state (see Definition 1 in Appendix~\ref{def:smooth}). The smoothness of a real-space quantum state means its corresponding wavefunction is also smooth, varying gently across space. This condition aims to exclude unrealistic states with infinite kinetic energy, as kinetic energy is related to the wavefunction's second derivative.
\begin{theoreminformal}[informal]
\label{thm:order-2-conv-informal}
Assume that $f=\frac{1}{m}\sum_{j=1}^m f_j$ is the sum of functions $\{f_1,\ldots,f_m\}$, and $f_j$ are convex and smooth functions.
Let $\sigma_f^*=\E_{j}[\opnorm{\nabla f_j (x^*)-\nabla f(x^*)}^2]$ be the gradient noise of the function $f$, where $\opnorm{\cdot}$ denotes $\ell^2$ norm for a vector.
Let $x^*$ be the unique local minimizer of $f$.

For any smooth initial state $\rho_0$ and functions $\alpha,\beta,\gamma:[0,T]\to\mathbb{R}$ that satisfies the strong ideal scaling condition
\begin{align*}
\psi(t) &= \alpha(t)-\gamma(t), \\
\chi(t) &= \alpha(t)+\beta(t)+\gamma(t), \\
\dot{\beta}(t) &= \dot{\gamma}(t) = e^{\alpha(t)},
\end{align*}
the result state $\rho_t$ of Stochastic Quantum Hybrid Dynamics \eqref{eqn:LME-for-SQHD} with learning rate schedule $u(t)=e^{-(\alpha(t)+\beta(t))}$ satisfies
\begin{align}\label{eqn:bound-order-2-conv}
\braket{\hat{f}}_t - f(x^*) = O\left(e^{-\beta(\tau)}+\eta\sigma_f^*\right),
\end{align}
where $\braket{\hat f}_t=\tr(\hat f\rho_t)$ and $\tau = \int_0^t u(s)\d s$ is the effective time.
\end{theoreminformal}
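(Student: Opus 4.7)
The plan is to adapt the Bregman–Lagrangian Lyapunov analysis underlying both classical accelerated gradient methods and the QHD convergence proof of \cite{QHD} to the open-system setting, then quantify the correction introduced by the dissipator $\mL_{NOISE}$. As candidate Lyapunov functional I would take
\begin{align*}
W(t)=e^{\beta(t)}\tr\!\bigl((\hat f-f(x^*))\rho_t\bigr)+\tfrac{1}{2}\tr\!\bigl(\mathcal{W}(t)\rho_t\bigr),
\end{align*}
where $\mathcal{W}(t)$ is the Weyl-symmetrized lift of the classical object $(\hat x-x^*+e^{-\gamma(t)}\hat p)^2$. Since $\mathcal{W}(t)\succeq 0$ the inequality $\braket{\hat f}_t-f(x^*)\leq e^{-\beta(t)}W(t)$ always holds, so the theorem reduces to proving a bound of the form $W(t)\leq W(0)+O\!\bigl(\eta\sigma_f^*\,e^{\beta(\tau)}\bigr)$, after which dividing by $e^{\beta(\tau)}$ recovers the stated $O(e^{-\beta(\tau)}+\eta\sigma_f^*)$ estimate.

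For the Hamiltonian part $u(t)\mL_{GD}$ I would mimic the Ehrenfest-level derivation in \cite{QHD}. Using $[H(t),\hat x]=-ie^{\psi}\hat p$ and $[H(t),\hat p]=ie^{\chi}\nabla\hat f$, combining the explicit $t$-derivatives of $e^{\beta}$ and $e^{-\gamma}$ inside $W(t)$, and invoking the strong ideal scaling identities $\dot\beta=\dot\gamma=e^{\alpha}$, $\psi=\alpha-\gamma$, $\chi=\alpha+\beta+\gamma$, the residual terms collapse into the pointwise convexity inequality $f(x)-f(x^*)\leq\nabla f(x)^\top(x-x^*)$ integrated against the position distribution $\braket{x|\rho_t|x}$. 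This delivers $\dot W_{GD}(t)\leq 0$, exactly recovering the noiseless QHD decay rate.

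For the dissipative part I use the Heisenberg adjoint $\tr(O\mL_{NOISE}[\rho])=\tr(\mL_{NOISE}^*[O]\rho)$. Since $[\hat f,\hat x]=0$, $[\hat f,\hat p_k]=i\partial_k f$, and $[\hat f,[\hat f,\hat p_k^2]]=-2(\partial_k f)^2$ (with analogous identities for each $\hat f_j$), the dissipator annihilates every summand of $W(t)$ except the $e^{-2\gamma}\hat p^2$ piece, and the noise rate reduces to
\begin{align*}
\dot W_{NOISE}(t)=\tfrac{1}{2}u(t)^2\eta\,e^{2(\chi-\gamma)}\,\tr\!\Bigl(\bigl(\tfrac{1}{m}\textstyle\sum_j\opnorm{\nabla f_j}^2-\opnorm{\nabla f}^2\bigr)\rho_t\Bigr).
\end{align*}
The scaling gives $u^2e^{2(\chi-\gamma)}=1$, collapsing the prefactor to $\eta/2$, and the bracketed observable is pointwise the SGD gradient variance $\E_j\opnorm{\nabla f_j(x)-\nabla f(x)}^2$. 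I then invoke the standard expected-smoothness estimate $\E_j\opnorm{\nabla f_j(x)-\nabla f(x)}^2\leq C_1L\bigl(f(x)-f(x^*)\bigr)+C_2\sigma_f^*$, which follows from $L$-smoothness and convexity of each $f_j$.

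Putting the pieces together yields a Gr\"onwall-type differential inequality $\dot W(t)\leq \tfrac{\eta C_1 L}{2}e^{-\beta(t)}W(t)+\tfrac{\eta C_2}{2}\sigma_f^*$; integrating and then rescaling to effective time via $\d\tau=u\,\d t$ produces the desired bound. The main obstacle is the self-referential feedback generated by the noise — the expected-smoothness term reintroduces $\braket{\hat f}_t-f(x^*)$ on the right-hand side, so one must verify that the Gr\"onwall integrating factor $\exp\!\bigl(\tfrac{\eta C_1 L}{2}\!\int_0^t e^{-\beta(s)}\,\d s\bigr)$ stays a universal constant, which holds because $e^{-\beta}$ is integrable under any polynomial ideal scaling. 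A secondary subtlety is maintaining the Weyl ordering of $\mathcal{W}(t)$ so that the cross term $\{\hat x-x^*,\hat p\}$ contributes correctly to $\dot W_{GD}$ yet vanishes under $\mL_{NOISE}$; the latter is automatic because $[\hat f,\hat x-x^*]=0$, but the former requires careful anticommutator bookkeeping against the $\dot\gamma=e^\alpha$ schedule.
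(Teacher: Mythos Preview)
Your plan is essentially the paper's: the same Lyapunov functional $\tfrac12(\hat x-x^*+e^{-\gamma}\hat p)^2+e^{\beta}(\hat f-f(x^*))$, the same commutator calculus showing that $\mL_{NOISE}^*$ annihilates every piece of it except $e^{-2\gamma}\hat p\cdot\hat p$, and the same reduction of the surviving term to the pointwise SGD variance $\tfrac1m\sum_j\|\nabla f_j\|^2-\|\nabla f\|^2$. Two departures from the paper are worth flagging.

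\textbf{Order of the time change.} The paper reparametrizes to effective time $\tau=\int_0^t u\,\d s$ \emph{before} differentiating the Lyapunov, not after. In $\tau$-time the generator is $\mL_{GD}+u\eta\,\mL_{NOISE}$, so the Hamiltonian piece carries no $u$ factor and the QHD cancellations go through verbatim under $\dot\beta=\dot\gamma=e^{\alpha}$. In your ordering the explicit $t$-derivatives of $e^{\beta(t)}$, $e^{-\gamma(t)}$ produce $e^{\alpha+\beta}$-sized terms while the commutator terms from $u(t)\mL_{GD}$ carry an extra factor $u(t)=e^{-(\alpha+\beta)}$; these no longer match and the claimed $\dot W_{GD}(t)\le 0$ does not follow from the scaling identities as written. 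Passing to $\tau$ first fixes this and, as a bonus, turns the noise prefactor into $e^{2\chi}u\eta\,e^{-2\gamma}/2=e^{\alpha+\beta}\eta/2=\tfrac{\eta}{2}\,\tfrac{\d}{\d\tau}e^{\beta}$, which integrates exactly.

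\textbf{Bounding the variance.} The paper does not run Gr\"onwall. It simply bounds the variance observable uniformly, $\tfrac1m\sum_j\|\nabla f_j\|^2-\|\nabla f\|^2\le 8dL_{\max}^2+\sigma_f^*$, so that $\tfrac{\d E}{\d\tau}\le \tfrac{\eta}{2}(8dL_{\max}^2+\sigma_f^*)\,\tfrac{\d}{\d\tau}e^{\beta}$ integrates in one line to $E(\tau)\le E(0)+\tfrac{\eta}{2}(8dL_{\max}^2+\sigma_f^*)e^{\beta(\tau)}$. Your expected-smoothness route is tighter in spirit, but in real time the resulting inequality $\dot W\le \tfrac{\eta C_1L}{2}e^{-\beta(t)}W+\tfrac{\eta C_2}{2}\sigma_f^*$ integrates the constant source over $[0,T]$ and leaves an $\eta\sigma_f^*\,T$ term that does not collapse to the stated $O(\eta\sigma_f^*)$ after dividing by $e^{\beta}$; the integrability claim for $e^{-\beta}$ also fails for the basic SGDM scaling $\beta=\log\tau$. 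Both symptoms disappear once you move to $\tau$-time first, but then the uniform bound is the shorter path.
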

The bound for the expected loss $\braket{\hat{f}}_t-f(x^*)$ in Theorem \ref{thm:order-2-conv-informal} consists of two parts: the first part shows a fast descent behavior at the initial stage, and the second part is linear to the learning rate $\eta$ and gradient noise $\sigma_f^*$, showing a fluctuation behavior in long-time limit.

Theorem \ref{thm:order-2-conv-informal} provides a systematic depiction of the dynamical system's behavior on convex and smooth objective functions, which captures features that may be helpful to practical applications. For instance, the separation of the quantum descent phase and the classical fluctuation phase, $t^*$, is obtained when the two components in Eq.\eqref{eqn:bound-order-2-conv} are of the same order. If we choose 
    \begin{align*}
    \alpha(t)&= -\log (t+t_\epsilon), \\
    \beta(t) &=\log (t+t_\epsilon) +\log C,C>1 \\
    \gamma(t) &=\log (t+t_\epsilon),\\
    t_\epsilon&>0,t_\epsilon\to 0,
    \end{align*}
in correspondence to stochastic momentum method \cite[Theorem 7.4]{handbook}, then $\braket{\hat{f}}_t-f(x^*)= O(\frac{1}{t}+\eta\sigma_f^*)$, the separation is $t^*\sim\eta^{-1}$.

\subsection{Stochastic Quantum Hamiltonian Descent as an Approximation}
We assume access to the standard evaluation oracle
\begin{align}
O_{f_j}\ket{x,z}=\ket{x,f_j(x)+z}, x\in\mathbb{R}^d,z\in\mathbb{R},
\end{align}
which queries function $f_j$ coherently for $j=1,\ldots,m$.
The most direct approach to running our method on a digital quantum computer is by resorting to the general Lindbladian simulation algorithm \cite{CW16,LW23,DLL24,PSZZ24}. These algorithms query the Hamiltonian and jump operators of the system, and, critically, require querying the full objective function $f$. For large datasets, this can be computationally prohibitive. Besides, the dissipative terms in the dynamical system correspond to complex environmental interactions, requiring a large-scale quantum circuit infeasible for current devices. 
To address this limitation, we propose Stochastic Quantum Hamiltonian Descent, a tailored quantum algorithm designed to circumvent the need for querying the entire dataset.

\begin{algorithm}[H]
\caption{Stochastic Quantum Hamiltonian Descent}\label{alg:SQHD}
\begin{algorithmic}
\Require Learning rate $\eta$, iteration number $N$, coefficient functions $\psi(t),\chi(t)$, and evaluation oracles $O_{f_j},j=1,\ldots,m$.
\begin{enumerate}
    \item Prepare the initial guess state $\rho_0$.
    \item For epoch $j=0,1,\ldots,N-1$:
    \begin{enumerate}
        \item Calculate the discrete parameters $a_j=\exp(\psi((j+1/2)\eta))$, $b_j=\exp(\chi((j+1/2)\eta))$.
        \item Update with unitary $\exp(-i \frac{\eta}{2} a_j (-\Delta/2))$.
        \item Update with unitary $\exp(-i\eta b_j \hat{f}_{\xi_j})$, where $\xi_j$ is independently and uniformly drawn from $\{1,\ldots,m\}$.
        \item Update with unitary $\exp(-i \frac{\eta}{2} a_j (-\Delta/2))$.
    \end{enumerate}
    \item Measure the final state with the position operator $\hat x=\int_x x\ket{x}\bra{x}\d x$ and output the measured value.
\end{enumerate}    
\end{algorithmic}
\end{algorithm}

This algorithm outputs a value $x$ such that $\E[f(x)]$ is approximately $\braket{\hat f}_t$ where $t=N\eta$. The update $U_{dSQHD}(N,\eta;\xi)$ (Step \textbf{2} of the algorithm) is similar to a second-order Trotter-Suzuki decomposition of the Hamiltonian dynamic in the Lindblad equation, and the difference is that the query to $\hat f$ is replaced by a stochastic query to individual $\hat f_j,j=1,\ldots,m$. The update can be implemented efficiently using techniques from real-space dynamics simulation \cite{Childs2022QuantumSO}.

The stochastic update in the algorithm is
\small
\begin{align}
\label{eqn:discrete-SQHD}
\Lambda_{dSQHD}(N,\eta)[\rho] = \E_{\xi}\left[U_{dSQHD}(N,\eta;\xi)\rho U^\dagger_{dSQHD}(N,\eta;\xi)\right],
\end{align}
\normalsize
and $\Lambda_{dSHQD}(k,\eta)$ is a good approximation for $\Lambda_{LS}(0,k\eta)$ for $k=1,\ldots,N$ for smooth quantum states.

\begin{theoreminformal}[informal]
\label{thm:order-2-approx-informal}
Given iteration number $N$ and learning rate $\eta\in(0,1)$. For any initial state $\rho_0$ such that $\tilde \rho_t=\Lambda_{LS}(0,t)[\rho_0]$ is smooth for $t\in[0,N\eta]$, and any smooth functions $e^{\psi(t)},e^{\chi(t)}$ and $f_j(x),j=1,\ldots,m$, the Stochastic Quantum Hybrid Dynamics process $\tilde \rho_t$ is an order-$2$ quantum weak approximation of the Stochastic Quantum Hamiltonian Descent process $\rho_k=\Lambda_{dSQHD} (k,\eta)[\rho_0]$.
\end{theoreminformal}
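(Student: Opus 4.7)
My plan is to prove the theorem by the standard local/global error decomposition for weak numerical schemes: first establish single-step consistency of order $\eta^3$, and then telescope over $N$ steps to obtain a cumulative weak error of $O(N\eta^3)=O(T\eta^2)$, which is the meaning of an order-$2$ weak approximation. The entire argument is driven by Taylor expansion, so the main preparatory work is to justify those expansions for the unbounded operators $K=-\Delta/2$ and $\hat f$ on the class of smooth states provided by the hypothesis.

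For the local step, I would fix epoch $j$ and set $a=a_j=e^{\psi((j+1/2)\eta)}$, $b=b_j=e^{\chi((j+1/2)\eta)}$. Direct BCH/Strang-splitting expansion of the symmetric product $e^{-i\eta a K/2}e^{-i\eta b \hat f_\xi}e^{-i\eta a K/2}$ gives
\begin{equation*}
U_{dSQHD}(1,\eta;\xi)=I-i\eta(aK+b\hat f_\xi)-\tfrac{\eta^2}{2}(aK+b\hat f_\xi)^2+R_\xi,
\end{equation*}
with $R_\xi$ a remainder of order $\eta^3$ built from nested commutators of $K$ and $\hat f_\xi$. Conjugating $\rho$ by $U_{dSQHD}$ and taking $\E_\xi$, the first-order term averages to $-i\eta[aK+b\hat f,\rho]=\eta\mathcal{L}_{GD}((j+1/2)\eta)[\rho]$ because $\E_\xi[\hat f_\xi]=\hat f$; in the second-order term the piece $\E_\xi[\hat f_\xi\rho\hat f_\xi]-\hat f\rho\hat f=\tfrac{1}{m}\sum_j\hat f_j\rho\hat f_j-\hat f\rho\hat f$ is exactly what rearranges, after combining with the $\hat f_\xi^2$ double commutators, into $\eta^2\mathcal{L}_{NOISE}((j+1/2)\eta)[\rho]$ as defined in the excerpt. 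In parallel I Taylor-expand the continuous channel $\Lambda_{LS}(j\eta,(j+1)\eta)$ around the midpoint $(j+1/2)\eta$; the midpoint choice of $a_j,b_j$ kills the derivative terms $\dot{\mathcal{L}}_{GD}$, and matching term by term I get
\begin{equation*}
\Lambda_{dSQHD}(1,\eta)-\Lambda_{LS}(j\eta,(j+1)\eta)=O(\eta^3),
\end{equation*}
uniformly on smooth states.

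For the global step I use the standard telescoping identity
\begin{equation*}
\Lambda_{LS}(0,N\eta)-\Lambda_{dSQHD}(N,\eta)=\sum_{k=0}^{N-1}\Lambda_{LS}\bigl((k{+}1)\eta,N\eta\bigr)\bigl[\Lambda_{LS}(k\eta,(k{+}1)\eta)-\Lambda_{dSQHD}(1,\eta)\bigr]\Lambda_{dSQHD}(k,\eta),
\end{equation*}
apply the single-step bound to each bracket, and use contractivity of $\Lambda_{LS}$ (as a CPTP map) to transport the per-step $O(\eta^3)$ error to the final time. Testing against $\hat f$, which is smooth by assumption, yields $|\braket{\hat f}_{\tilde\rho_{N\eta}}-\braket{\hat f}_{\rho_N}|=O(N\eta^3)=O(T\eta^2)$.

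The hard part, and the reason the hypotheses insist on \emph{smooth} states and smooth $f_j,\psi,\chi$, is controlling the remainder $R_\xi$ and its continuous-time counterpart: they contain nested commutators such as $[K,[K,\hat f_\xi]]$, $[\hat f_\xi,[K,\hat f_\xi]]$ and $K^3$, all of which are unbounded. I must therefore prove an a priori regularity lemma asserting that both trajectories $\tilde\rho_t$ and $\rho_k$ remain inside a Schwartz-like class on which these operators, composed up to total degree three, have finite expectation, with bounds that depend only on finitely many moments of $K$ and of $\hat f_j$ in the initial state and on $\sup_t|\psi(t)|,|\chi(t)|,|\dot\psi(t)|,|\dot\chi(t)|$. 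Propagating these smoothness norms forward in $k$ without accumulating factors that blow up with $N$ is the principal technical obstacle; once it is in place the weak-error estimate follows from the Taylor matching described above.
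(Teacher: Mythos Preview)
Your plan is correct and follows essentially the same architecture as the paper's proof: single-step Taylor/BCH matching to $O(\eta^3)$, with the variance of $\hat f_\xi$ producing exactly $\mL_{NOISE}$, followed by accumulation over $N=T/\eta$ steps, and the unbounded-operator remainders controlled via uniform smoothness of the trajectories. Two technical choices differ. First, you expand both sides around the midpoint $(j+\tfrac12)\eta$, which directly cancels the $\dot\mL_{GD}$ contribution; the paper instead expands around the left endpoint $t_k$ and carries $\tfrac{\eta^2}{2}\dot\mL_{GD}$ on both sides before matching. Second, for the global step you use the telescoping identity and invoke trace-norm contractivity of the CPTP map $\Lambda_{LS}$ to transport each local $O(\eta^3)$ error; the paper instead forms $e_k=\tilde\rho_{t_k}-\rho_k$, derives $e_{k+1}=\mR[e_k]+O(\eta^3)$ with $\tracenorm{\mR[e_k]}\le(1+C_1\eta)\tracenorm{e_k}$, and solves the recursion. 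Your route avoids the (harmless) $e^{C_1T}$ factor and does not require smoothness of the difference $e_k$, only of $\rho_k$ itself. Both approaches ultimately hinge on the same a priori regularity: that the discrete iterates $\rho_k$ stay smooth with $k$-uniform bounds, which the paper dispatches by citing a Bourgain-type Sobolev-preservation lemma for Schr\"odinger evolution with smooth periodic potential (their Lemma~\ref{lem:constant-Hamiltonian-smoothness-preservation}); that is precisely the lemma you would need to supply for the step you flag as ``the principal technical obstacle.''
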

Theorem \ref{thm:order-2-approx-informal} rigorously connects the discrete gradient descent process \eqref{eqn:discrete-SQHD} to the continuous dynamical system \eqref{eqn:LME-for-SQHD}. The approximation result also holds for an adaptive learning rate (see Appendix~\ref{sec:proof-approx}).
The approximation result implies that the continuous-time dynamics can be realized using methods analogous to a Trotter-Suzuki decomposition, making it a promising candidate for execution on near-term quantum devices. In contrast, a direct simulation of the Lindblad dynamics, which is required for other approaches such as Quantum Langevin Dynamics for optimization \cite{QLD}, is often computationally prohibitive for large datasets and would require simulating complex environmental interactions, demanding large-scale quantum circuits that are not feasible on current hardware. The proposed SQHD algorithm circumvents these challenges by avoiding direct Lindbladian simulation.

\subsection{Numerical results}
\label{sec:numerical}
We consider unconstrained finite-sum optimization problems with five non-convex objective functions: Styblinski-Tang function (\texttt{dw}), Michalewicz function (\texttt{mich}), Cube-Wave functions (\texttt{cubewave}), and two functions from the Nonlinear Least Squares problem (\texttt{sino,sino-alt}). These functions are common for testing optimization problems and cover diverse landscapes.
We evaluate the performance of SQHD against QHD and SGDM on these problems by comparing their expected loss and the probability of successfully finding the global minimum. The results demonstrate that SQHD achieves comparable or even superior solution quality to QHD on these non-convex problems.

\begin{figure*}
    \centering
    \includegraphics[width=\textwidth]{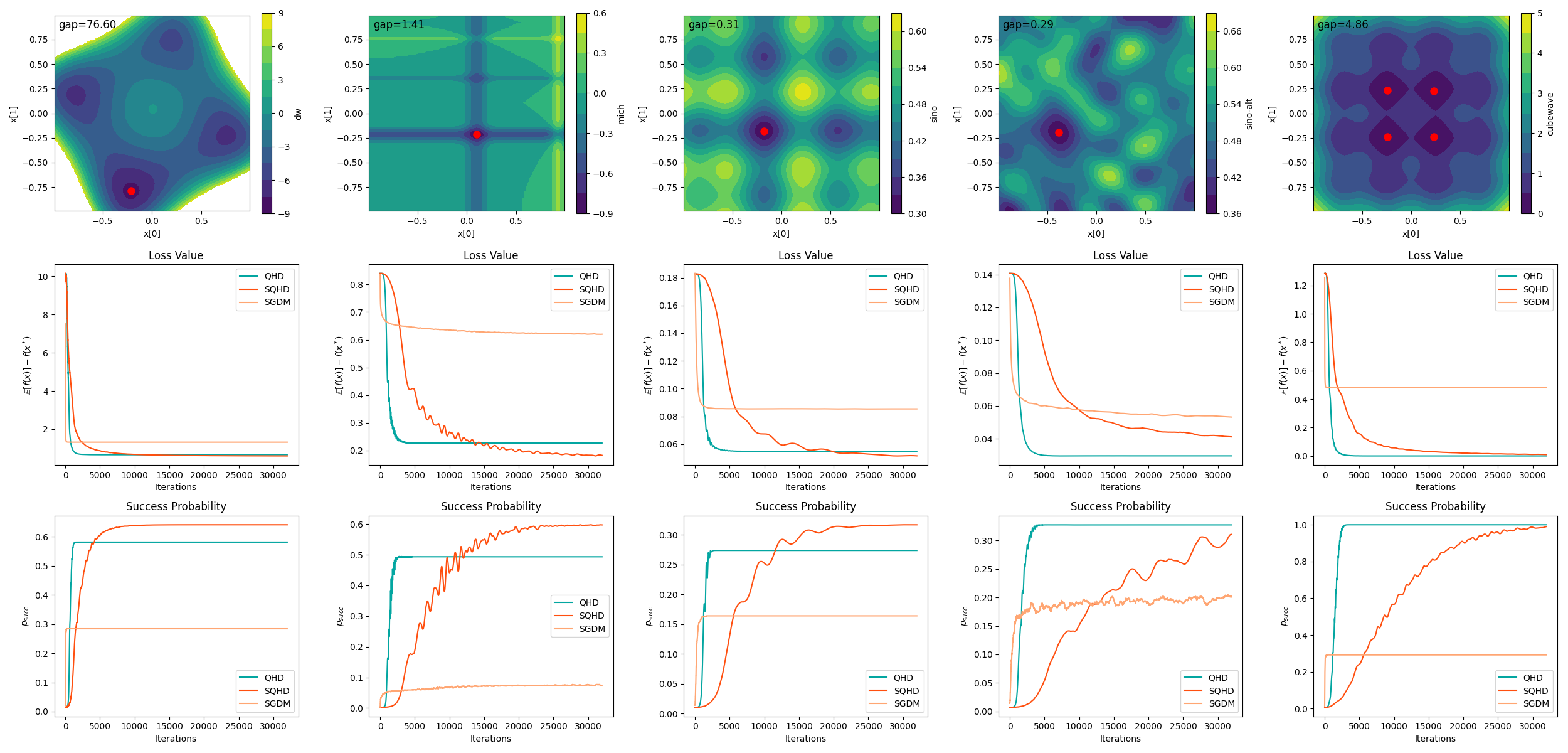}
    \caption{Comparison of SQHD, QHD, and SGDM on objective functions \texttt{dw,mich,sino,sino-alt,cubewave} under default parameter settings (see Appendix~\ref{sec:full-numerical}), demonstrating SQHD's ability to escape from local minimum. In the first row, the contour maps of these objective functions are shown, showing different landscape features. We use the expected loss $\E[f(x)]-\min f$ and $\delta$-success probability $p_{succ}=P\left(\frac{f(x)-\min f}{\max f-\min f}<\delta\right)$ to evaluate the performance of different algorithms, which are shown in the second row and the third row.}
    \label{fig:general-comparison}
\end{figure*}

We also investigate the effect of different parameter settings, including learning rate $\eta$ and resolution of space discretization. We find that a smaller learning rate usually improves the solution quality, especially for problems with large gradient noise. While the discretization resolution does not affect the solution quality of SQHD, a larger resolution can lead to a slower convergence. This is significantly different from QHD, whose solution quality and convergence speed are consistent in different resolutions. Details of these results can be found in the Appendix~\ref{sec:full-numerical}.

\section{Method}\label{sec:method}
\subsection{Quantum Hamiltonian Descent}
In \cite{QHD}, the authors proposed the QHD method via quantizing the Bregman-Lagrangian framework \cite{Wibisono2016AVP}.
The resulting dynamics can be viewed as the movement of a quantum particle with wave function $\Psi(t)$ satisfying
\begin{equation}\label{eqn:QHD-equation}
\frac{\d \Psi(t)}{\d t} = -i H(t)\Psi(t),
\end{equation}
where
\begin{equation}\label{eqn:QHD-Hamiltonian}
H(t) = e^{\psi(t)}\left(-\frac{1}{2}\Delta\right)+e^{\chi(t)}\hat{f}.
\end{equation}
The corresponding evolution channel for time period $[0,T]$ is $\Lambda_{QHD}(0,T)$. The dynamical system for QHD is guaranteed to converge on convex objective functions. A more rigorous analysis that connects the dynamical system to the discrete QHD algorithm can be found in \cite{chakrabarti2025speedupsconvexoptimizationquantum}.

\subsection{Stochastic Modified Equation and Weak Approximation}
We follow the definition of weak approximation from \cite{Li2017StochasticME,weak-approx-app}, but we consider bounded functions instead.
\begin{definition}[Weak Approximation]
\label{def:weak-approx}
Let $0<\eta<1$, $T>0$ and set $N=\lfloor T/\eta\rfloor$. We say that a continuous stochastic process $X_t,t\in[0,T]$ is an \textit{order-$\alpha$ weak approximation} to a discrete stochastic process $x_k,k=0,1,\ldots,N$ if for every bounded function $g$, there exists $C>0$, independent of $\eta$, such that for all $k=0,1,\dots,N$, 
\begin{equation}
| \E [g(X_{k\eta})]-\E [g(x_{k})] | < C \eta^\alpha.\nonumber
\end{equation}
\end{definition}

\begin{figure}[htbp]
    \centering
    \includegraphics[width=\linewidth]{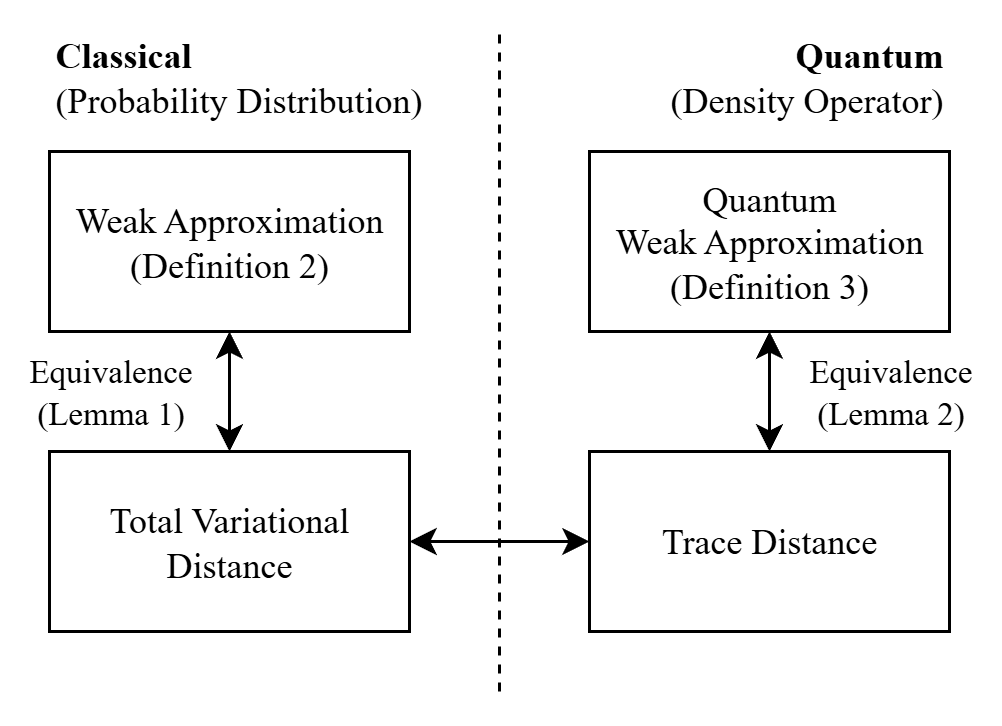}
    \caption{Conceptual schema illustrating the relationship between weak approximations and distance measures in different contexts. Weak approximation, which measures the similarity between sequences of probability distributions, provides a formal connection between optimization algorithms and their continuous-time limits. We extend this measure of approximation to the quantum regime, where density operators (instead of probability distributions) are considered.}
    \label{fig:weak-approx-schema}
\end{figure}

The weak approximation property can be induced from the total variation distance between the probability distributions of $X_{k\eta},x_k$.
\begin{lemma}
\label{lem:weak-approx-equivalence}
Consider a continuous stochastic process $x_C=\{X_t|t\in[0,T]\}$ and a discrete stochastic process $x_D=\{x_k|k=0,1,\ldots,N\}$, and their probability density functions are respectively $P(X_t)$ and $P(x_k)$. If there exists $C'>0$, independent of $\eta$, such that for all $k=0,1,\dots,N$, 
\begin{align*}
d_{TV}(X_{k\eta},x_k)=\frac{1}{2}\int |P(X_{k\eta})[x]-P(x_k)[x]|dx
< C'\eta^\alpha,
\end{align*}
then $x_C$ is an order-$\alpha$ weak approximation for $x_D$.
\end{lemma}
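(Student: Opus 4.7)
The plan is to reduce the claim to the standard duality between total variation distance and integration against bounded functions. Writing the expectation of a bounded test function $g$ as an integral against the probability density, I get
\begin{align*}
\bigl|\E[g(X_{k\eta})] - \E[g(x_k)]\bigr|
= \left|\int g(x)\bigl(P(X_{k\eta})[x] - P(x_k)[x]\bigr)\,dx\right|,
\end{align*}
and the first step is to pull the absolute value inside the integral and factor out the sup-norm of $g$.

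Second, I would let $M = \sup_x |g(x)|$ (which is finite since $g$ is bounded) and bound
\begin{align*}
\left|\int g(x)\bigl(P(X_{k\eta})[x] - P(x_k)[x]\bigr)\,dx\right|
\le M \int \bigl|P(X_{k\eta})[x] - P(x_k)[x]\bigr|\,dx
= 2M\, d_{TV}(X_{k\eta},x_k),
\end{align*}
where the last equality just unpacks the definition of $d_{TV}$ given in the lemma statement. Then I invoke the hypothesis $d_{TV}(X_{k\eta},x_k) < C'\eta^\alpha$ to conclude $|\E[g(X_{k\eta})] - \E[g(x_k)]| < 2MC'\eta^\alpha$.

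Finally, setting $C = 2MC'$ (which depends on $g$ through $M$ but not on $\eta$, as required by Definition~\ref{def:weak-approx}) yields the desired order-$\alpha$ weak approximation bound for every $k=0,1,\ldots,N$. There is no serious obstacle here: the lemma is essentially the well-known variational characterization of total variation distance, namely $d_{TV}(\mu,\nu) = \tfrac{1}{2}\sup_{\|g\|_\infty \le 1}|\int g\,d\mu - \int g\,d\nu|$, specialized to the present setting. The only mild subtlety worth flagging is that the constant $C$ in Definition~\ref{def:weak-approx} is permitted to depend on $g$, so absorbing $\|g\|_\infty$ into $C$ is legitimate and no uniformity over $g$ is needed.
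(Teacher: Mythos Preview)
Your proposal is correct and follows essentially the same approach as the paper's own proof: write the expectation as an integral against the density, bound by the sup-norm of $g$ times the $L^1$ distance of the densities, and identify the latter with $2\,d_{TV}$. The only cosmetic difference is that you keep track of the factor of $2$ explicitly (the paper absorbs it into the constant), and you add the remark that $C$ is allowed to depend on $g$, which is a helpful clarification.
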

\begin{proof}
For any bounded function $g$, there exists constant $G>0$ such that $|g(x)|<G,\forall x$. Then for any random variable $A$ and $B$,
\begin{align*}
\E [g(A)]&=\int P(A)[x] g(x)\d x, \\
|\E [g(A)]-\E [g(B)]|
&=\left|\int (P(A)[x]-P(B)[x]) g(x)\d x\right| \\
&\leq\int |P(A)[x]-P(B)[x]| |g(x)|\d x \\
&\leq d_{TV}(A,B)G.
\end{align*}
Let $A=X_{k\eta}, B=x_k$, then for $k=0,1,\ldots,N$
\begin{align*}
| \E [g(X_{k\eta})]-\E [g(x_{k})] |\leq d_{TV}(X_{k\eta},x_k)G<C'G\eta^\alpha,
\end{align*}
and $x_C$ is an order-$\alpha$ weak approximation for $x_D$.
\end{proof}

The idea of weak approximation can be applied to the quantum regime, where density operators (instead of probability distributions) are considered:
\begin{definition}[Quantum Weak Approximation]
\label{def:quantum-weak-approx}
Let $0<\eta<1$, $T>0$ and set $N=\lfloor T/\eta\rfloor$. We say that a continuous quantum process $\tilde\rho_t,t\in[0,T],\tilde\rho_t\in L(\mathcal{H})$ is an {\it order-$\alpha$ quantum weak approximation} to a discrete quantum process $\rho_k,k=0,1,\ldots,N,\rho_k\in L(\mathcal{H})$ if for any observable $O$ i.e. linear bounded hermitian operator on $\mH$, there exists $C>0$, independent of $\eta$, such that for all $k=0,1,\dots,N$, 
	\begin{align*}
	\vert \tr(O\tilde\rho_{k\eta})-\tr(O\rho_{k}) \vert < C \eta^\alpha.\nonumber
	\end{align*}
\end{definition}
The quantum weak approximation property can be induced from the trace distance between the density operators.
\begin{lemma}\label{lem:td}
Consider a continuous quantum process $x_C=\{\tilde\rho_t|t\in[0,T]\}$ and a discrete quantum process $x_D=\{\rho_k|k=0,1,\ldots,N\}$. If there exists $C'>0$, independent of $\eta$, such that for all $k=0,1,\dots,N$, the trace distance between $\tilde\rho_{k\eta}$ and $\rho_k$ satisfy
\begin{align*}
\frac{1}{2}\tracenorm{\tilde\rho_{k\eta} - \rho_k}
< C'\eta^\alpha,
\end{align*}
where $\tracenorm{\cdot}$ denotes trace norm, then $x_C$ is an order-$\alpha$ quantum weak approximation for $x_D$.
\end{lemma}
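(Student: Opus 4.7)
The plan is to mirror the classical argument of Lemma~\ref{lem:weak-approx-equivalence}, with the only substantive replacement being that the $L^\infty$ bound on the test function $g$ becomes an operator-norm bound on the observable $O$, and the $L^1$ distance between probability densities becomes the trace distance between density operators. Concretely, I would fix an arbitrary bounded Hermitian observable $O$ on $\mH$, set $G = \opnorm{O} < \infty$, and aim to show $|\tr(O\tilde\rho_{k\eta}) - \tr(O\rho_k)| < 2GC'\eta^\alpha$, so that $C := 2GC'$ works in Definition~\ref{def:quantum-weak-approx}.

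The single ingredient needed beyond the hypothesis is the quantum analogue of the bound $|\int (P(A)-P(B))g\,\d x| \leq \tracenorm{P(A)-P(B)}_1 \|g\|_\infty$, namely the trace-norm/operator-norm duality inequality
\begin{align*}
|\tr(O \Delta)| \leq \opnorm{O}\,\tracenorm{\Delta}
\end{align*}
valid for any bounded operator $O$ and trace-class $\Delta$. I would prove this in one line from the spectral decomposition of the Hermitian operator $\Delta := \tilde\rho_{k\eta}-\rho_k$: writing $\Delta = \sum_j \lambda_j \ket{v_j}\bra{v_j}$, one has $|\tr(O\Delta)| \leq \sum_j |\lambda_j|\,|\bra{v_j}O\ket{v_j}| \leq \opnorm{O}\sum_j |\lambda_j| = \opnorm{O}\tracenorm{\Delta}$.

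Combining this with the linearity $\tr(O\tilde\rho_{k\eta}) - \tr(O\rho_k) = \tr(O(\tilde\rho_{k\eta}-\rho_k))$ and the assumed bound $\tfrac{1}{2}\tracenorm{\tilde\rho_{k\eta}-\rho_k} < C'\eta^\alpha$ immediately gives $|\tr(O\tilde\rho_{k\eta}) - \tr(O\rho_k)| < 2GC'\eta^\alpha$ uniformly in $k=0,1,\ldots,N$, which is precisely the order-$\alpha$ quantum weak approximation condition.

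Honestly, there is no hard part here; the statement is the direct quantum translation of Lemma~\ref{lem:weak-approx-equivalence}, and the only thing one must not gloss over is invoking (or proving in-line) the Hölder-type inequality $|\tr(OA)|\leq \opnorm{O}\tracenorm{A}$, which is the reason the assumption of boundedness of $O$ appears in Definition~\ref{def:quantum-weak-approx}. I would present the proof in the same two-step format as the proof of Lemma~\ref{lem:weak-approx-equivalence} to emphasize the parallel between the classical and quantum settings illustrated in Fig.~\ref{fig:weak-approx-schema}.
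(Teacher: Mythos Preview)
Your proposal is correct and follows essentially the same route as the paper's own proof: linearity of the trace, the H\"older-type inequality $|\tr(O\Delta)|\leq\opnorm{O}\tracenorm{\Delta}$, and then the hypothesis, yielding $C=2\opnorm{O}C'$. If anything, your in-line spectral-decomposition justification of that inequality is cleaner than the paper's first displayed step, which writes $|\tr(O\Delta)|=\tracenorm{O\Delta}$ as an equality when it should be an inequality.
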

\begin{proof}
For any observable $O$ and $k=0,1,\ldots,N$,
\begin{align*}
\vert \tr(O\tilde\rho_{k\eta})-\tr(O\rho_{k}) \vert
&= \tracenorm{O(\tilde\rho_{k\eta}-\rho_{k})} \nonumber \\
&\leq \opnorm{O} \tracenorm{\tilde\rho_{k\eta} - \rho_k}\nonumber \\
&< 2\opnorm{O}C'\eta^\alpha, \nonumber
\end{align*}
where $\opnorm{\cdot}$ denotes spectral norm for an operator, and $C=2\opnorm{O} C'$ is a constant independent of $\eta$. Therefore, $x_C$ is an order-$\alpha$ quantum weak approximation for $x_D$.
\end{proof}
We can express a random variable $X$ with probability density function $p_X$ as a quantum state with density operator $\rho_X=\int p_X(x)\ket{x}\bra{x}\d x$. In this form, the trace distance between $\tilde\rho_{k\eta}$ and $\rho_k$ coincides with the total variation distance $d_{TV}(X,\tilde X)$.

\section{Discussion}
In this work, we have introduced and analyzed the SQHD method. Through a carefully constructed open quantum system model, we established a rigorous convergence guarantee for convex objectives and demonstrated that our discrete-time algorithm faithfully approximates the continuous dynamics. Beyond theoretical insights, our numerical results confirm SQHD's capability to navigate complex, non-convex landscapes. While these findings position SQHD as a promising candidate for scalable quantum optimization, there are also several caveats and promising directions for future research.

\paragraph{Trade-off Between Cost and Convergence} A primary consideration for SQHD is the trade-off between the computational cost per iteration and the total number of iterations required for convergence. By design, SQHD has a lower cost for querying the objective functions $\{f_1,\ldots,f_m\}$ within a single iteration when compared to the standard QHD algorithm. However, our preliminary observations suggest that the convergence rate of SQHD is generally slower than that of QHD. The specific conditions and problem classes where the reduced iteration cost of SQHD outweighs its slower convergence to provide an overall speedup remain an open and important question.

\paragraph{Hyperparameter Selection} The performance of both QHD and SQHD is sensitive to the choice of hyperparameters. The authors of the original QHD method \cite{QHD} suggest that its effectiveness, particularly on non-convex problems, stems from a "global search" phase inherent in the optimization dynamics. We believe a similar phase exists for SQHD; however, excessive randomness can weaken its effectiveness. This stochasticity, which is governed by hyperparameters such as the learning rate and the Hamiltonian coefficients, must be carefully managed. A rigorous, quantitative analysis of how these hyperparameters influence the optimization landscape and the global search phase would be invaluable. Such a study would not only deepen our theoretical understanding but also provide practical guidance for applying SQHD to real-world tasks.

\paragraph{Future Directions} Based on our findings, we identify several exciting avenues for future work.
First, the framework we have developed could be extended to more recent variants of quantum optimization algorithms based on dynamical systems, such as the gradient-based QHD \cite{gradient-QHD} and Quantum Langevin Dynamics for optimization \cite{QLD}.
Second, we plan to conduct a more comprehensive exploration of the optimal operating regimes for SQHD. This involves characterizing the problem structures and hyperparameter ranges where the advantages of the stochastic approach are most prominent.
Finally, we look forward to running SQHD on current quantum computing devices and applying it to practical, real-world optimization problems to benchmark its performance and demonstrate its utility beyond theoretical analysis.

\section*{Acknowledgements}
This work was supported in part by the National Natural Science Foundation of China Grants No. 92465202, 62325210, 12447107, 62272441, 12204489, 62301531, and Beijing Natural Science Foundation No. 4252013, 4252012.

\bibliography{sample.bib}

\onecolumngrid
\appendix
\section{Problem Setting and Assumptions}
\label{sec:setting-and-assumption}
We consider the $d$-dimensional unconstrained finite-sum optimization problem on the domain $\mC=[-1,1]^d\subset \mathbb{R}^d$, expressed as
\begin{align}
\min_{x\in\mC} f(x),f(x)=\frac{1}{m}\sum_{j=1}^m f_j(x), f_j:\mC\to\mathbb{R}.
\end{align}
Only query on $f_j(x),j=1,\ldots,m$ (instead of the total objective function $f(x)$) is allowed.
Consider the Hilbert space $\mH$ on $\mC$. Then $\hat f=\int_{x\in \mC}f(x)\ket{x}\bra{x}\d x$ is a linear operator on $\mH$.
For operators, we use $\opnorm{\cdot}$ for operator norm, $\tracenorm{\cdot}$ for trace norm.
For vectors, we use $\opnorm{\cdot}$ for $\ell^2$ norm.

We also consider the following definition of smoothness on functions, quantum states, mixed quantum states, and operators.
\begin{definition}\label{def:smooth}
Let $x =[x_1,\ldots, x_d]$. A function $f:\mC\to\mathbb{C}$ is smooth up to order $K$ with bound $C$ if for $k=0,1,\ldots,K$ and $j_l\in\{1,\ldots,d\}$, $\prod_{l=1}^{k}\nabla_{j_l} f=\frac{\partial^k f}{\partial x_{j_1}\partial x_{j_2}\cdots \partial x_{j_k}}$ exists and
\begin{align}
\max_{x\in \mC}\left|\prod_{l=1}^{k}\nabla_{j_l} f(x)\right|\leq C.
\end{align}
Let $\nabla_j=\int_{x} \frac{\partial}{\partial x_j}\ket{x}\bra{x}\d x$. A quantum state $\ket{\psi}=\int_x \psi(x)\ket{x}\d x$ on $\mH$ is smooth up to order $K$ with bound $C$ if for $k=0,1,\ldots,K$ and $j_l\in\{1,\ldots,d\}$, $\prod_{l=1}^{k}\nabla_{j_l}\ket{\psi}=\int_{x} \frac{\partial^k \psi( x)}{\partial x_{j_1}\partial x_{j_2}\cdots \d x_{j_k}}\ket{x}\d x$ exists and
\begin{align}
\opnorm{\prod_{l=1}^{k}\nabla_{j_l}\ket{\psi}}\leq C.
\end{align}
A mixed quantum state $\rho$ is smooth up to order $K$ with bound $C$ if for $k=0,1,\ldots,K$ and $j_l\in\{1,\ldots,d\}$, $\prod_{l=1}^{k}\nabla_{j_l} \rho$ exists and
\begin{align}
\tracenorm{\prod_{l=1}^{k}\nabla_{j_l} \rho}\leq C.
\end{align}
An operator $\sigma$ on $\mH$ is smooth up to order $K$ with bound $C$ if for $k=0,1,\ldots,K$, $k_1=0,\ldots,k$, $k_2=k-k_1$ and $j_l\in\{1,\ldots,d\}$, $\sigma'=\left(\prod_{l=1}^{k_1}\nabla_{j_l}\right) \sigma \left(\prod_{l=k_1+1}^{k_1+k_2}\nabla_{j_l}\right)$ exists and
\begin{align}
\tracenorm{\sigma'}\leq C\tracenorm{\sigma}.
\end{align}
\end{definition}

\section{Stochastic Quantum Hybrid Dynamics is a valid Lindblad Master Equation}
Let $A_0=u(t)H$, $A_j=u(t)\sqrt{\eta} e^{\chi(t)}\hat{f}_j,j=1,\ldots,m$, and
\begin{align}
\gamma_{jk}=\delta_{jk}\frac{1}{m}-\frac{1}{m^2}, j,k=1,\ldots,m.
\end{align}
The matrix $\gamma$ is positive semidefinite because for any $x\in\mathbb{C}^m$
\begin{align}
x\gamma x^\dagger
&=\sum_{j,k=1}^m \gamma_{jk}x_j \bar{x}_k \\
&= \frac{\sum_{j=1}^m{|x_j|^2}}{m}-\left|\frac{\sum_{j=1}^m x_j}{m}\right|^2\geq 0.
\end{align}
Then Eq.\eqref{eqn:LME-for-SQHD} is of the form
\begin{align}
&\frac{\d \sigma(t)}{\d t} = -i[A_0,\sigma] + \sum_{j=1}^m\sum_{k=1}^m\gamma_{jk}\left(A_j\sigma A_k^\dagger -\frac{1}{2}(A_k^\dagger A_j\sigma+\sigma A_k^\dagger A_j)\right), 
\end{align}
where  $A_0$ is Hermitian and $\gamma$ is positive semidefinite. This means that Eq.\eqref{eqn:LME-for-SQHD} is a valid Lindblad master equation.

\section{Formal Statement and Proof of the Convergence Result}\label{sec:proof-conv}
The main convergence theorem is as follows:
\begin{theorem}
\label{thm:order-2-conv}
Assume that $f=\frac{1}{m}\sum_{j=1}^m f_j$ is the sum of functions $\{f_1,\ldots,f_m\}$. Assume $f_j$ is a convex function smooth up to order $2$ with bound $ L_j$ for $j=1,\ldots,m$.
Let $\sigma_f^*=\E_{j}[\opnorm{\nabla f_j (x^*)-\nabla f(x^*)}^2]$ be the gradient noise of the function $f$.
Let $L_{\max}=\max_j L_j$.
Let $x^*$ be the unique local minimizer of $f$.

For any initial state $\rho_0$ smooth up to order 2, and functions $\alpha,\beta,\gamma:[0,T]\to\mathbb{R}$ that satisfies the strong ideal scaling condition
\begin{align}
\psi(t) &= \alpha(t)-\gamma(t), \\
\chi(t) &= \alpha(t)+\beta(t)+\gamma(t), \\
\dot{\beta}(t) &= \dot{\gamma}(t) = e^{\alpha(t)}.
\end{align}
the result state $\rho_t$ of Stochastic Quantum Hybrid Dynamics \eqref{eqn:LME-for-SQHD} with learning rate schedule $u(t)=e^{-(\alpha(t)+\beta(t))}$ satisfies
\begin{align}
\braket{\hat{f}}_t - f(x^*) \leq C_1 e^{-\beta(\tau)}+C_2 \eta(\sigma_f^*+dL_{\max}^2),
\end{align}
for some constant $C_1,C_2>0$, where $\tau = \int_0^t u(s)\d s$ is the effective time.
\end{theorem}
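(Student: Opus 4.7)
The plan is to use a quantum Lyapunov argument modeled on the convergence proof of noiseless QHD \cite{QHD,chakrabarti2025speedupsconvexoptimizationquantum}, and then carefully account for the extra contribution produced by $\mL_{NOISE}$. I would define a time-dependent Lyapunov operator $\hat{W}(t)$ whose expectation $W(t)=\tr(\hat{W}(t)\rho_t)$ quantizes the standard Bregman--Lagrangian energy, heuristically of the form
\begin{align*}
\hat{W}(t) = e^{\beta(t)}(\hat{f}-f(x^*)) + \tfrac{1}{2}\hat{P}(t)^\dagger\hat{P}(t),
\end{align*}
where $\hat{P}(t)$ combines the momentum $-i\nabla$ and the shifted position $\hat{x}-x^*$ with weights pinned by the strong ideal scaling $\psi=\alpha-\gamma$, $\chi=\alpha+\beta+\gamma$, $\dot\beta=\dot\gamma=e^\alpha$. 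The target is to prove $\frac{\d W}{\d t}\le u(t)^2\eta\,R(t)$ for an $R(t)$ integrable and bounded by $O(\sigma_f^*+dL_{\max}^2)$.

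By the product rule,
\begin{align*}
\tfrac{\d W}{\d t}=\tr(\dot{\hat W}(t)\rho_t)+u(t)\tr(\hat W(t)\mL_{GD}(t)[\rho_t])+u(t)^2\eta\,\tr(\hat W(t)\mL_{NOISE}(t)[\rho_t]).
\end{align*}
The first two terms would recover the noiseless analysis: convexity of each $f_j$ (and hence of $f$) combined with the strong ideal scaling makes the Hamiltonian combination non-positive, which on its own yields the $O(e^{-\beta(\tau)})$ part of the bound after the change of variable $\tau=\int_0^t u(s)\d s$ and dividing by the weight $e^{\beta(\tau)}$.

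The main obstacle is quantifying the noise term. Since $\hat f$ and $\hat f_j$ commute with $\hat x$, the double commutators $[\hat f,[\hat f,\rho]]-\frac{1}{m}\sum_j[\hat f_j,[\hat f_j,\rho]]$ have zero trace against any operator diagonal in position, so only the momentum sector of $\hat W$ can contribute. Unpacking the commutators in the position representation using $[\nabla,g(\hat x)]=(\nabla g)(\hat x)$, the trace reduces to an integral of the gradient covariance $\tfrac{1}{m}\sum_j(\nabla f_j-\nabla f)(\nabla f_j-\nabla f)^\top$ against the position marginal of $\rho_t$---precisely the quantum analogue of the classical SGD noise covariance. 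Combining this with the local bound $\opnorm{\nabla f_j(x)-\nabla f(x)}^2\le 2\opnorm{\nabla f_j(x^*)-\nabla f(x^*)}^2+2L_{\max}^2\opnorm{x-x^*}^2$ (from second-order smoothness) controls the remainder by $\sigma_f^*+dL_{\max}^2\,\tr((\hat x-x^*)^2\rho_t)$, where the position-variance term can be re-absorbed into $W(t)$ via a Gr\"onwall argument and the factor $d$ arises from summing over the $d$ position coordinates in the kinetic sector.

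Finally I would integrate the resulting differential inequality, use the second-order smoothness of $\rho_0$ to bound the initial Lyapunov value $W(0)$, and divide through by $e^{\beta(\tau)}$ to recover the stated bound $C_1 e^{-\beta(\tau)}+C_2\eta(\sigma_f^*+dL_{\max}^2)$. The delicate step is ensuring that the noise contribution---once multiplied by the exponentially growing weight $e^{\beta(t)}$ hidden in $\hat{W}$---remains integrable in $\tau$; this is exactly what the identity $u(t)^2 e^{2\chi(t)}=e^{2\gamma(t)}$, enforced by the scaling conditions, is designed to guarantee, as it matches the noise covariance prefactor to the effective time element $\d\tau=u(t)\d t$ so that the accumulated noise stays $O(\eta)$ rather than blowing up with $\beta(t)$.
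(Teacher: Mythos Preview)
Your proposal is correct and follows essentially the same route as the paper: the same Lyapunov functional $\hat E(\tau)=\tfrac12(\hat x+e^{-\gamma}\hat p)^2+e^{\beta}\hat f$, the same observation that only the kinetic sector of $\hat E$ survives the double commutators in $\mL_{NOISE}$, and the same reduction of the noise contribution to the pointwise gradient variance $\tfrac{1}{m}\sum_j\langle\|\widehat{\nabla f_j}\|^2\rangle-\langle\|\widehat{\nabla f}\|^2\rangle$, with the scaling identities then making the integrated noise exactly proportional to $e^{\beta(\tau)}$ so that it becomes $O(\eta)$ after dividing through. The one place you diverge is in bounding that variance: you propose $\|\nabla f_j(x)-\nabla f(x)\|^2\lesssim \sigma_f^*+L_{\max}^2\|x-x^*\|^2$ and a Gr\"onwall absorption of the position spread, which is a bit delicate since $\hat E$ does not directly control $\langle\hat p^2\rangle$ and hence not $\langle\hat x^2\rangle$ either. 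The paper sidesteps this entirely by using compactness of the domain $\mathcal C=[-1,1]^d$: the convexity inequality $\|\nabla g(x)-\nabla g(x^*)\|^2\le 2dL\,(g(x)-g(x^*))$ together with the uniform bound $|f|\le L_{\max}$ gives $\tfrac{1}{m}\sum_j\|\nabla f_j(x)\|^2-\|\nabla f(x)\|^2\le 8dL_{\max}^2+\sigma_f^*$ outright, so no Gr\"onwall is needed.
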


Before we prove Theorem \ref{thm:order-2-conv}, we need the following lemmas.
\begin{lemma}
Let $\rho_t$ be the solution to a Lindblad master equation
\begin{align}
\frac{d\rho_t}{d t}=\mL(t)[\rho_t]=-i[H(t),\rho_t]+\sum_{j,k} a_{jk} (A_j\rho_t A_k^\dagger-\frac{1}{2}\{A_k^\dagger A_j,\rho_t\}),
\end{align}
with initial state $\rho_0$. The expectation of observable is defined as $\braket{O}_t=\tr(O\rho_t)$. Then
\begin{align}
\frac{d\braket{O(t)}_t}{d t}
&=\braket{\frac{d O(t)}{d t}}_t+i\braket{[H(t),O(t)]}_t+\sum_{j,k} a_{jk} \braket{(A_k^\dagger O(t) A_j-\frac{1}{2}\{A_k^\dagger A_j,O(t)\})}_t.
\end{align}
\end{lemma}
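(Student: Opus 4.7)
\smallskip

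The plan is to derive the identity by direct differentiation of $\braket{O(t)}_t = \tr(O(t)\rho_t)$ and then pushing all operator rearrangements onto $O(t)$ using the cyclic property of the trace. There is no real obstacle here; the lemma is essentially the Heisenberg-picture form of the Lindblad generator, so the proof is a bookkeeping exercise.

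First, I would apply the Leibniz rule to get
\begin{align*}
\frac{d}{dt}\tr\bigl(O(t)\rho_t\bigr) = \tr\!\left(\frac{dO(t)}{dt}\,\rho_t\right) + \tr\!\left(O(t)\,\frac{d\rho_t}{dt}\right),
\end{align*}
and substitute the Lindblad equation for $d\rho_t/dt$. The first term is already $\braket{dO(t)/dt}_t$.

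Next, I would treat the Hamiltonian contribution and the dissipator contribution separately. For the Hamiltonian part, $\tr(O(t)\cdot(-i)[H(t),\rho_t])$ is rewritten using the cyclic property as $i\tr([H(t),O(t)]\rho_t) = i\braket{[H(t),O(t)]}_t$. For each term in the dissipator, I would use cyclic invariance to shift the density operator to the right: $\tr(O A_j \rho A_k^\dagger) = \tr(A_k^\dagger O A_j \rho)$, $\tr(O A_k^\dagger A_j \rho)$ stays as is, and $\tr(O\rho A_k^\dagger A_j) = \tr(A_k^\dagger A_j O \rho)$. Collecting these gives exactly
\begin{align*}
a_{jk}\,\tr\!\left(\Bigl(A_k^\dagger O(t) A_j - \tfrac{1}{2}\{A_k^\dagger A_j, O(t)\}\Bigr)\rho_t\right),
\end{align*}
which equals $a_{jk}\braket{A_k^\dagger O(t) A_j - \tfrac{1}{2}\{A_k^\dagger A_j,O(t)\}}_t$ after summation over $j,k$.

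The only mild subtlety is justifying that the trace and time derivative commute, which holds since $O(t)$ and $\rho_t$ are assumed to be sufficiently regular (in the applications of the lemma, $O(t)$ will be polynomial-in-$x$ or polynomial-in-$\nabla$ times smooth coefficient functions, and $\rho_t$ will be smooth by Theorem~\ref{thm:order-2-conv}), so all traces involved are absolutely convergent and differentiation under the trace is valid. Summing the three contributions yields the claimed formula.
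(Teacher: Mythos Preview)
Your proposal is correct and mirrors the paper's proof almost line for line: Leibniz rule on $\tr(O(t)\rho_t)$, substitute the Lindblad equation, and use cyclicity of the trace to move the Hamiltonian and dissipator terms onto $O(t)$. The paper does not discuss the regularity justification you add (and note that smoothness of $\rho_t$ is an \emph{assumption} of Theorem~\ref{thm:order-2-conv}, not a conclusion), but otherwise the arguments are identical.
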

\begin{proof}
\begin{align}
\frac{d\braket{O(t)}_t}{d t}
&= \frac{\d}{\d t}\tr(\rho_t O(t)) \\
&= \tr(\frac{\d \rho_t}{\d t}O(t)) + \tr(\rho_t\frac{\d O(t)}{\d t}) \\
&= \tr(\mL(t)[\rho_t]O(t)) + \braket{\frac{\d O(t)}{\d t}}_t. \label{eqn:appendix-A-lemma-1-first-decomposition}
\end{align}
Breaking down $\tr(\mL(t)[\rho_t]O(t))$, we have
\begin{align}
\tr(-i[H(t),\rho_t] O(t))
&= -i \tr(O(t)H(t)\rho_t-H(t)O(t)\rho_t) \\
&= \braket{i[H(t),O(t)]}_t, \\
\tr((A_j\rho_t A_k^\dagger-\frac{1}{2}\{A_k^\dagger A_j,\rho_t\})O(t))
&=  \tr(A_k^\dagger O(t) A_j \rho_t -\frac{1}{2}O(t)A_k^\dagger A_j\rho_t-\frac{1}{2}A_k^\dagger A_j O(t)\rho(t)) \\
&= \braket{A_k^\dagger O(t) A_j-\frac{1}{2}\{A_k^\dagger A_j,O(t)\})}_t.
\end{align}
Plug in Eq.\eqref{eqn:appendix-A-lemma-1-first-decomposition} and the proof is finished.
\end{proof}
We let
\begin{align}
\hat{p}_j&=-i\nabla_j, \\
\hat{p}&=-i\nabla=[-i\nabla_1,\ldots,-i\nabla_d], \\
\hat{x}_j &= \int_{x} x_j \ket{x}\bra{x}\d x, \\
\hat{x} &= [\hat{x}_1,\ldots,\hat{x}_d].
\end{align}
Then $\Delta=-\sum_{j=1}^d \hat{p}_j^2=\nabla\cdot \nabla$.
\begin{lemma}
\begin{align}
[\hat{f}, \hat{x}\cdot\hat{p}] &= i\hat{x}\cdot \widehat{\nabla f},
&[\hat{f}, \hat{p}\cdot\hat{x}] &= i\hat{x}\cdot \widehat{\nabla f},\\
[\hat{f},\hat{x}\cdot\hat{x}] &= 0,
&[\hat{f}, \hat{p}\cdot\hat{p}] &= \Delta f+2(\widehat{\nabla f})\cdot\nabla.
\end{align}
\end{lemma}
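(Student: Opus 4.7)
The plan is to reduce all four identities to two primitive commutators and then apply the operator Leibniz rule $[A,BC]=[A,B]C+B[A,C]$ componentwise. The two primitives are
\begin{align*}
[\hat f,\hat x_j] &= 0, \\
[\hat f,\hat p_j] &= i\,\widehat{\nabla_j f}.
\end{align*}
The first is immediate because $\hat f$ and $\hat x_j$ are both multiplication operators on position wavefunctions. The second is the standard canonical identity: acting on $\psi(x)$, $[\hat f,-i\partial_j]\psi = -i f\partial_j\psi + i\partial_j(f\psi) = i(\partial_j f)\psi$. I would write this out in one line as the only genuine computation in the lemma.

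Given these, three of the four identities are one step. For $[\hat f,\hat x\cdot\hat p]=\sum_j[\hat f,\hat x_j\hat p_j]$, the Leibniz rule gives $\sum_j\hat x_j[\hat f,\hat p_j]=i\,\hat x\cdot\widehat{\nabla f}$, since $[\hat f,\hat x_j]=0$. The identity for $\hat p\cdot\hat x$ is the same calculation with the order swapped, and at the end one uses that $\widehat{\nabla_j f}$ commutes with $\hat x_j$ (both are position operators) to rewrite $\widehat{\nabla f}\cdot\hat x = \hat x\cdot\widehat{\nabla f}$. The identity $[\hat f,\hat x\cdot\hat x]=0$ is immediate from $[\hat f,\hat x_j]=0$.

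The only identity requiring a touch of care is $[\hat f,\hat p\cdot\hat p]=\sum_j[\hat f,\hat p_j^2]$. By Leibniz, $[\hat f,\hat p_j^2]=[\hat f,\hat p_j]\hat p_j+\hat p_j[\hat f,\hat p_j]=i\,\widehat{\nabla_j f}\,\hat p_j+i\,\hat p_j\widehat{\nabla_j f}$. To consolidate these terms one needs the auxiliary commutator $[\hat p_j,\widehat{\nabla_j f}]=-i\,\widehat{\nabla_j^2 f}$, derived exactly as the second primitive above. Substituting yields
\begin{align*}
[\hat f,\hat p_j^2] = 2i\,\widehat{\nabla_j f}\,\hat p_j + \widehat{\nabla_j^2 f},
\end{align*}
and summing over $j$ gives $\widehat{\Delta f}+2\widehat{\nabla f}\cdot\nabla$ after using $\hat p_j=-i\nabla_j$. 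The ``main obstacle'' is really just bookkeeping: tracking operator ordering carefully so that the $\widehat{\Delta f}$ term appears with the correct coefficient and sign from the two cross terms. No analytic subtlety beyond smoothness of $f$ (which is assumed) is needed, because all manipulations act on test wavefunctions in the Schwartz-like class where $\hat p_j$ and the multiplication operator $\hat f$ are densely defined.
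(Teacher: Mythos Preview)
Your proposal is correct and takes essentially the same approach as the paper: both arguments verify the identities by expanding the action on a test wavefunction and using the product rule. Your presentation is slightly more structured in that you isolate the primitives $[\hat f,\hat x_j]=0$ and $[\hat f,\hat p_j]=i\,\widehat{\nabla_j f}$ first and then invoke the Leibniz rule, whereas the paper computes each of the four commutators from scratch on a test function, but the underlying calculation is identical.
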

\begin{proof}
Consider a smooth test function $\ket{\psi}$.

$[\hat{f},\hat{x}_k \hat{p}_k]\ket{\psi}=-i\int_x \left( f(x)x_k \frac{\partial \psi}{\partial x_k} - x_k\frac{\partial}{\partial x_k}\left(f(x)\psi(x)\right)\right)\ket{x}\d x$, then $[\hat{f},\hat{x}\cdot\hat{p}]=\sum_{k=1}^d [\hat{f},\hat{x}_k \hat{p}_k]= i \hat{x}\cdot \widehat{\nabla f}$.

$[\hat{f},\hat{p}_k \hat{x}_k]\ket{\psi}=-i\int_x \left( f(x)\frac{\partial }{\partial x_k}\left(x_k\psi(x)\right) - \frac{\partial}{\partial x_k}\left(x_kf(x)\psi(x)\right)\right)\ket{x}\d x$, then $[\hat{f},\hat{p}\cdot\hat{x}]=\sum_{k=1}^d [\hat{f},\hat{p}_k \hat{x}_k]= i \hat{x}\cdot \widehat{\nabla f}$.

$\hat{f}$ and $\hat{x}_j^2$ commutes, therefore $[\hat{f},\hat{x}\cdot\hat{x}]=0$.

$[\hat{f},\hat{p}_k^2]\ket{\psi}=-\int_x \left(f(x) \frac{\partial^2 \psi}{\partial x_j^2} - \frac{\partial^2}{\partial x_j^2}\left(f(x)\psi(x)\right)\right)\ket{x}\d x$, and
\begin{align}    
[\hat{f},\hat{p}\cdot\hat{p}] = \sum_k [\hat{f},\hat{p}_k^2] = \Delta f + 2(\widehat{\nabla f})\cdot \nabla.
\end{align}
\end{proof}

\begin{lemma}
\label{lem:var-trans}
Suppose $f_j$ is convex and smooth up to order $2$ with bound $L_j$ for $j=1,\ldots,m$. Let $L_{\max}=\max_j L_j$. Then for all $x\in\mathbb{R}^d$
\begin{align}
\E_j[\opnorm{\nabla f_j(x)-\nabla f(x)}^2]
&= \frac{1}{m}\sum_{j=1}^m \opnorm{\nabla f_j(x)}^2 - \opnorm{\nabla f(x)}^2\\
&\leq  8dL_{\max}^2+\sigma_f^*.
\end{align}
\end{lemma}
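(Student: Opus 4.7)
The plan is to first establish the stated equality, which is just the vector-valued variance-bias identity, and then to bound the variance by splitting off the gradient noise at the minimizer from a smoothness-controlled remainder.

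For the equality, I would invoke the standard identity $\E\|Y-\E Y\|^2=\E\|Y\|^2-\|\E Y\|^2$ for a vector-valued random variable $Y$. Applied to $Y=\nabla f_j(x)$ with $j$ uniform on $\{1,\dots,m\}$, one has $\E_j Y=\nabla f(x)$, so
\[
\E_j\bigl[\|\nabla f_j(x)-\nabla f(x)\|^2\bigr]=\frac{1}{m}\sum_{j=1}^m\|\nabla f_j(x)\|^2-\|\nabla f(x)\|^2,
\]
which is the first line of the claim. No smoothness is used here.

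For the inequality, I would use the decomposition
\[
\nabla f_j(x)-\nabla f(x)=\bigl[\nabla f_j(x)-\nabla f_j(x^*)-(\nabla f(x)-\nabla f(x^*))\bigr]+\bigl[\nabla f_j(x^*)-\nabla f(x^*)\bigr],
\]
together with the sub-additivity of variance $\mathrm{Var}_j[A+B]\le 2\mathrm{Var}_j[A]+2\mathrm{Var}_j[B]$. The second bracket contributes exactly $\sigma_f^*$ by definition, while the first bracket has expectation zero in $j$, so its variance is at most $\E_j[\|\nabla f_j(x)-\nabla f_j(x^*)\|^2]$. This last quantity is bounded through the entry-wise smoothness hypothesis of Definition~\ref{def:smooth}: smoothness up to order $2$ with bound $L_j$ gives $|\partial_k f_j(x)|\le L_j$ for every coordinate and every $x\in\mathcal{C}$, hence $\|\nabla f_j(x)\|^2\le dL_j^2$ pointwise. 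Applying $\|u-v\|^2\le 2\|u\|^2+2\|v\|^2$ then gives $\|\nabla f_j(x)-\nabla f_j(x^*)\|^2\le 4dL_j^2\le 4dL_{\max}^2$, and the stated $O(dL_{\max}^2)$ contribution follows.

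The main obstacle I expect is matching the exact numerical constants ($8$ and $1$) that appear in the statement, since the most natural triangle-inequality route yields a factor $2$ in front of $\sigma_f^*$. This can be reconciled in two ways: either (a) observe that the crude entry-wise bound $\E_j\|\nabla f_j(x)\|^2\le dL_{\max}^2$ already dominates the variance, so the slack in $\sigma_f^*$ can be absorbed, or (b) split the cross term via Young's inequality $2\langle A_j,B_j\rangle\le \delta\|A_j\|^2+\delta^{-1}\|B_j\|^2$ with $\delta$ chosen so that the $\sigma_f^*$ coefficient becomes $1$ and the $dL_{\max}^2$ coefficient is still $O(1)$. Either route will deliver the bound $8dL_{\max}^2+\sigma_f^*$ (with the constant $8$ being somewhat cosmetic), and this is the form that will feed into the convergence argument of Theorem~\ref{thm:order-2-conv}.
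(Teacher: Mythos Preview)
Your proposal is correct and reaches the same conclusion, but by a genuinely different and more elementary route than the paper's proof.

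The paper also splits off the noise at $x^*$, but it decomposes into \emph{three} pieces, $\nabla f_j(x)-\nabla f_j(x^*)$, $\nabla f_j(x^*)-\nabla f(x^*)$, and $\nabla f(x^*)-\nabla f(x)$, and then invokes convexity together with the co-coercivity inequality $\|\nabla g(x)-\nabla g(x^*)\|^2\le 2L(g(x)-g(x^*))$ (cited from \cite[Lemma 2.29]{handbook}) to bound the first and third pieces by $2dL_{\max}(f(x)-f(x^*))$, before finally using $|f|\le L_{\max}$. Your approach instead exploits that Definition~\ref{def:smooth} bounds the \emph{first} derivatives of $f_j$ by $L_j$ pointwise, which immediately gives $\|\nabla f_j(x)\|^2\le dL_j^2$ and renders the convexity/co-coercivity machinery unnecessary. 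Your option (a) is in fact decisive: since $\E_j\|\nabla f_j(x)\|^2-\|\nabla f(x)\|^2\le dL_{\max}^2$ already, the lemma as stated follows in one line, with neither the decomposition nor $\sigma_f^*$ actually needed. The paper's route does yield the intermediate form $4dL_{\max}(f(x)-f(x^*))+\sigma_f^*$, which looks sharper and is indeed quoted once inside the proof of Theorem~\ref{thm:order-2-conv}; but that proof immediately replaces it by the crude bound $8dL_{\max}^2+\sigma_f^*$, so nothing downstream is lost by your simpler argument.
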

\begin{proof}
To start with, for all $x\in \mathbb{R}^d$
\begin{align}
&\E_j[\opnorm{\nabla f_j (x) - \nabla f(x)}^2] \\
\leq& \E_{j}[\opnorm{\nabla f_j (x)-\nabla f_j (x^*)}^2] \\
+& \E_{j}[\opnorm{\nabla f_j (x^*)-\nabla f (x^*)}^2] \\
+& \E_{j}[\opnorm{\nabla f (x^*)-\nabla f (x)}^2].
\end{align}
The first term and the third term are related to the expected smoothness of functions $f_j$ and $f$. The second term is the gradient noise $\sigma_f^*$ defined in Theorem \ref{thm:order-2-conv}.
For a function $g$ smooth up to order $2$ with bound $B$,
\begin{align}
|\nabla_j g(x)-\nabla_j g(y)|
&\leq \sum_{k=1}^d \max_{z} \left|\frac{\partial^2 g}{\partial z_k\partial z_j}\right| |x_k-y_k| \\
&\leq \sqrt{d}B \opnorm{x-y}, \\
\opnorm{\nabla g(x)-\nabla g(y)} &\leq \sqrt{\sum_{j=1}^d |\nabla_j g(x)-\nabla_j g(y)|^2} \\
&\leq dB \opnorm{x-y}.
\end{align}
Since $f_j$ is convex and smooth up to order $2$ with bound $L_j$ and $f$ is convex and smooth up to order $2$ with bound $L_{\max}$, then
\begin{align}
\opnorm{\nabla f_j(x)-\nabla f_j(y)}&\leq dL_j \opnorm{x-y}, \\
\opnorm{\nabla f(x)-\nabla f(y)}&\leq dL_{\max} \opnorm{x-y}.
\end{align}
by \cite[Lemma 2.29]{handbook} we know
\begin{align}
\opnorm{\nabla f_j (x)-\nabla f_j (x^*)}^2&\leq 2dL_j \left(f_j(x)-f_j(x^*)\right), \\
\opnorm{\nabla f (x)-\nabla f (x^*)}^2&\leq 2dL_{\max} \left(f(x)-f(x^*)\right).
\end{align}
Then the first term and the third term are both upper bounded by
$2dL_{\max}(f(x)-f(x^*))$, and $|f(x)|\leq \frac{1}{m}\sum_{j=1}^m L_j\leq L_{\max}$. Then
\begin{align}
\E_j[\opnorm{\nabla f_j(x)-\nabla f}(x)_2^2]\leq  8dL_{\max}^2+\sigma_f^*.
\end{align}
\end{proof}

Below is the proof for Theorem \ref{thm:order-2-conv}.
\begin{proof}
Since $\tau(t) = \int_0^t u(s)\d s$, then
\begin{align}
\frac{\d \rho}{\d t} &= \frac{\d\rho}{\d \tau}\frac{\d \tau}{\d t}, \\
\frac{\d \tau}{\d t} &= u(t), \tau(0) = 0, \\
\frac{\d\rho}{\d \tau} &= \mL_{GD}[\rho]+u(\tau)\eta\mL_{NOISE}[\rho].
\end{align}
We assume $x^*=0,f(x^*)=0$ without loss of generality.
Consider the following Lyapunov function \cite[Theorem 1]{QHD}:
\begin{align}
\hat {E}(\tau) &= (\hat{x}+e^{-\gamma(\tau)}\hat{p})^2/2+e^{\beta(\tau)}\hat{f}, \\
E(\tau) &= \braket{\hat{E}(\tau)}_\tau.
\end{align}
$\hat f$ is a bounded operator, and $\rho_0$ is smooth up to order $2$, therefore $E(0)$ is bounded. Next we provide an upper bound of $E(\tau)$'s derivative, and therefore provide an upper bound of $E(\tau)$ and $\braket{\hat f}_\tau$.
\begin{align}
\frac{\d E(\tau)}{\d \tau}
=&  \left(\braket{\frac{\d \hat{E}(\tau)}{\d \tau}}_\tau+i\braket{[H(\tau),\hat{E}(\tau)]}_\tau\right) \\
&+{u(\tau)\eta e^{2\chi(\tau)}}\left(\frac{1}{m}\sum_{j}  \braket{\hat{f}_j \hat{E}(\tau) \hat{f}_j-\frac{1}{2}\{\hat{f}_j^2,\hat{E}(\tau)\}}_\tau - \braket{\hat{f} \hat{E}(\tau) \hat{f}-\frac{1}{2}\{\hat{f}^2,\hat{E}(\tau)\}}_\tau\right) ,\label{eqn:dEdt}
\end{align}
The first term in Eq.\eqref{eqn:dEdt} is bounded due to  \cite[Proposition 2 ]{QHD}:
\begin{align}
\braket{\frac{d \hat{E}(\tau)}{d \tau}}_\tau+i\braket{[H(\tau),\hat{E}(\tau)]}_\tau\leq e^{\alpha(\tau)+\beta(\tau)}\braket{\hat{f}-\hat{x}\cdot \widehat{\nabla f}}_\tau.
\end{align}
The second term in Eq.\eqref{eqn:dEdt} becomes
\begin{align}
e^{2(\alpha(\tau)+\beta(\tau))}u(\tau)\frac{\eta}{2}\left(\frac{1}{m}\sum_{j=1}^m  \braket{\|\widehat{\nabla f_j}\|^2}_\tau
-
\braket{\|\widehat{\nabla f}\|^2}_\tau
\right),
\end{align}
because
\begin{align}
\hat{g} \hat{E} \hat{g}-\frac{1}{2}\{\hat{g}^2,\hat{E}\} &= \frac{1}{2}[\hat{g},[\hat{E},\hat{g}]], \quad g=f,f_1,\ldots,f_m, \\
[\hat{g},[\hat{f},\hat{g}]] & = 0, \\
[\hat{g},[\hat{x}\cdot\hat{x},\hat{g}]] & = 0, \\
[\hat{g},[\hat{x}\cdot\hat{p},\hat{g}]] & = [\hat{g},i\hat{x}\cdot \widehat{\nabla g}] = 0, \\
[\hat{g},[\hat{p}\cdot\hat{x},\hat{g}]] & = [\hat{g},i\hat{x}\cdot \widehat{\nabla g}] = 0, \\
[\hat{g},[\Delta,\hat{g}]]
& = -[\hat{g},\widehat{\Delta g}+2(\widehat{\nabla g})\cdot\nabla] \\
&=-( \hat{g}\widehat{\Delta g}+2\hat{g}(\widehat{\nabla g})\cdot\nabla
-
((\widehat{\Delta g})\hat{g}+2(\widehat{\nabla g})\cdot(\widehat{\nabla g})+2\hat{g}(\widehat{\nabla g})\cdot\nabla)
)\\
&=2\|\widehat{\nabla g}\|^2,
\end{align}
where $\|\widehat{\nabla g}\|^2=\sum_{j=1}^d (\widehat{\nabla_j g})^2$, and
\begin{align}
e^{2\chi(\tau)}u(\tau)\eta\frac{e^{-2\gamma(\tau)}}{2} = e^{2(\alpha(\tau)+\beta(\tau))}u(t)\frac{\eta}{2} = e^{\alpha(\tau)+\beta(\tau)}\frac{\eta}{2}. 
\end{align}
Therefore
\begin{align}
\frac{\d E(\tau)}{\d \tau}
&\leq
e^{\alpha(\tau)+\beta(\tau)}\left(
\braket{\hat{f}-\hat{x}\cdot \widehat{\nabla f}}_\tau
+
\frac{\eta}{2} \left(\frac{1}{m}\sum_{j=1}^m  \braket{\|\widehat{\nabla f_j}\|^2}_\tau
-
\braket{\|\widehat{\nabla f}\|^2}_\tau
\right)
\right).
\end{align}
Since $f$ is smooth up to order $2$ and convex, 
\begin{align}
\braket{\hat{f}-\hat{x}\cdot \widehat{\nabla f}}_t &\leq 0, \\
\frac{1}{m}\sum_{j=1}^m  \braket{\|\widehat{\nabla f_j}\|^2}_\tau
-\braket{\|\widehat{\nabla f}\|^2}_\tau
&\leq 4dL_{\max}\braket{\hat f}_\tau+\sigma_f^*\leq 8dL_{\max}^2 + \sigma_f^*.
\end{align}
Then
\begin{align}
E(\tau)-E(0) &\leq \int_0^\tau e^{\alpha(s)+\beta(s)}\frac{\eta }{2} (4dL_{\max}G+\sigma_f^*)\d s \\
&= (e^{\beta(\tau)}-e^{\beta(0)})\frac{\eta }{2}(4dL_{\max}G+\sigma_f^*) \\
&\leq e^{\beta(\tau)}\frac{\eta }{2}(4dL_{\max}G+\sigma_f^*),
\end{align}
because $\frac{\d}{\d t}(e^{\beta(t)})=\dot\beta(t) e^{\beta(t)}=e^{\alpha(t)+\beta(t)}$. Then
\begin{align}
\braket{\hat f}_\tau &\leq e^{-\beta(\tau)}E(\tau) \\
&\leq e^{-\beta(\tau)}E(0) + \frac{\eta }{2} (4dL_{\max}G+\sigma_f^*), \\
&\leq C_1 e^{-\beta(\tau)}+ C_2 \eta(dL_{\max}G+\sigma_f^* ),
\end{align}
where
\begin{align}
C_1 &= E(0), \\
C_2 &= 2, \\
\tau(t)&=\int_0^t u(s)\d s.
\end{align}
\end{proof}

\section{Formal Statement and Proof of the Approximation Result}\label{sec:proof-approx}
We start with a rigorous definition of the SQHD algorithm.
\begin{definition}[Stochastic Quantum Hamiltonian Descent]
Stochastic Quantum Hamiltonian Descent with iteration number $N$ and learning rate $\eta$ is defined as
\begin{align}
&U_{dSQHD}(N,\eta;\xi) \\
=&\prod_{j=N-1}^{0}\left[ \exp(-i \frac{\eta}{2} a_j (-\Delta/2)) \exp(-i\eta b_j \hat{f}_{\xi_j})\exp(-i \frac{\eta}{2} a_j (-\Delta/2))\right],
\end{align}
where $\xi$ is a $N$-dimensional random vector that $\xi_j$ is independently and uniformly drawn from $\{1,\ldots,m\}$ for $j=0,\ldots,N-1$. The corresponding channel is denoted as
\begin{align}
\Lambda_{dSQHD}(N,\eta)[\rho] = \E_{\xi}\left[U_{dSQHD}(N,\eta;\xi)\rho U^\dagger_{dSQHD}(N,\eta;\xi)\right].
\end{align}
\end{definition}

The convergence result (Theorem \ref{thm:order-2-conv}) is about $\Lambda_{LS}(0,T)[\rho_0]$. However, such a process can not be directly run on a digital quantum computer. Instead, we implement a gate-based quantum algorithm. We need an additional approximation result to argue the convergence of the discrete quantum algorithm.
The key obstacle is the unbounded Hamiltonian, which forbids approximation analysis regarding the operator norm. As a fallback, we consider the approximation error on the vector norm \cite{An2021TimeDU}. The intuition is that while the operator norm of the Hamiltonian is unbounded, the vector norm for the result state can be bounded. Therefore, we expect the gradient norm $\Delta\psi$ to be bounded, which leads to Definition \ref{def:smooth}. The following approximation theorem is based on the smoothness assumption on the quantum states:
\begin{theorem}
\label{thm:order-2-approx}
Given iteration number $N$ and learning rate $\eta\in(0,1)$.
For any initial state $\rho_0$ such that $\tilde \rho_t=\Lambda_{LS}(0,t)[\rho_0]$ is smooth up to order $6$ for $t\in[0,N\eta]$, any functions $e^{\psi(t)},e^{\chi(t)}$ smooth up to order $3$ and $f_j(x),j=1,\ldots,m$ smooth up to order $6$, the Stochastic Quantum Hybrid Dynamics process $\tilde \rho_t$ is an order-$2$ quantum weak approximation of the Stochastic Quantum Hamiltonian Descent process $\rho_k=\Lambda_{dSQHD} (k,\eta)[\rho_0]$.
\end{theorem}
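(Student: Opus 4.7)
The plan is to invoke Lemma \ref{lem:td} and reduce the claim to a trace-distance bound $\tfrac{1}{2}\tracenorm{\tilde\rho_{k\eta}-\rho_k} \le C\eta^2$ uniform in $k\le N$. Following the standard weak-approximation template of \cite{Li2017StochasticME}, I would decompose the error analysis into (i) a one-step local discretization bound of order $O(\eta^3)$ and (ii) a telescoping argument across $N=T/\eta$ steps using the fact that the CPTP map $\Lambda_{dSQHD}(1,\eta)$ is a contraction in trace norm, yielding a global error $N\cdot O(\eta^3)=O(\eta^2)$.

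\textbf{Local error by matching Taylor expansions.} For the local error I would expand both the continuous channel $\Lambda_{LS}(k\eta,(k+1)\eta)$ and the averaged discrete update $\Lambda_{dSQHD}(1,\eta)$ (at frozen time $k\eta$) in powers of $\eta$. A Dyson series applied to the Lindblad equation gives an $\eta$-term equal to $\mL_{GD}(k\eta)$ and an $\eta^{2}$-term combining $\tfrac{1}{2}(\mL_{GD}^{2}+\dot{\mL}_{GD})$ with the full $\mL_{NOISE}$ contribution; note $\mL_{NOISE}$ appears only at second order because it already carries an explicit $\eta$ prefactor in $\mL_{LS}$. On the discrete side, expanding the symmetric Strang-style factorization and then averaging over the uniform draw $\xi_k$ yields a first-order term equal to $\eta\mL_{GD}(k\eta)$ (since $\E_{\xi}[\hat{f}_\xi]=\hat{f}$) and a second-order term that matches the continuous one, up to precisely the discrepancy
\begin{align*}
\frac{\eta^{2}}{2}\left(\frac{1}{m}\sum_{j=1}^{m}[\hat{f}_{j},[\hat{f}_{j},\cdot]]-[\hat{f},[\hat{f},\cdot]]\right),
\end{align*}
which is exactly $-\eta^{2}\mL_{NOISE}(k\eta)$. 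This matching by design is the raison d'\^etre for the specific form of $\mL_{NOISE}$: the continuous generator was engineered so that the $\eta$- and $\eta^2$-terms cancel, leaving only an $O(\eta^3)$ remainder.

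\textbf{Main obstacle: unbounded generator.} The hard part is controlling this $\eta^{3}$ remainder without access to an operator-norm bound on the Hamiltonian, since $-\Delta/2$ is unbounded. I would instead proceed in the spirit of \cite{An2021TimeDU,Childs2022QuantumSO}, bounding the remainders at the level of the state in trace norm. Both Dyson and Taylor remainders are integrals over products of at most three factors of the generator acting on $\tilde\rho_{s}$, i.e.\ expressions of the form $B_{1}B_{2}B_{3}\,\tilde\rho_{s}\,B_{4}B_{5}B_{6}$ with each $B_{i}\in\{\Delta,\hat{f}_{j},\hat{x}\cdot\nabla\text{-derivatives of }e^{\psi},e^{\chi}\}$. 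The smoothness-up-to-order-$6$ hypothesis on $\tilde\rho_{t}$ and on $f_{j}$, together with smoothness-up-to-order-$3$ of the coefficient functions $e^{\psi(t)},e^{\chi(t)}$ (which controls the three time derivatives needed for the Dyson remainder), gives $\tracenorm{B_{1}B_{2}B_{3}\tilde\rho_{s}B_{4}B_{5}B_{6}}\le C$ uniformly in $s\in[k\eta,(k+1)\eta]$; integrating in $s$ produces the desired $O(\eta^{3})$ local bound.

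\textbf{Global error via telescoping.} Finally, I would telescope using
\begin{align*}
\tilde\rho_{(k+1)\eta}-\rho_{k+1}
=\bigl(\tilde\rho_{(k+1)\eta}-\Lambda_{dSQHD}(1,\eta)[\tilde\rho_{k\eta}]\bigr)+\Lambda_{dSQHD}(1,\eta)\bigl[\tilde\rho_{k\eta}-\rho_{k}\bigr].
\end{align*}
Taking trace norms, contractivity of the CPTP map removes $\Lambda_{dSQHD}(1,\eta)$ from the second summand, and iterating the recursion yields $\tfrac{1}{2}\tracenorm{\tilde\rho_{N\eta}-\rho_{N}}\le C N\eta^{3}=O(\eta^{2})$, from which Lemma \ref{lem:td} concludes the order-$2$ quantum weak approximation. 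The standing hypothesis that $\tilde\rho_{t}$ remains smooth throughout $[0,N\eta]$ is exactly what permits the local estimate to be applied at every intermediate step; the adaptive-learning-rate extension mentioned in the statement follows by replacing $\eta$ with $\eta_{k}$ in each local bound and summing, with no change in the overall structure.
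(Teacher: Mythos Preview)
Your proposal is correct and follows essentially the same strategy as the paper: expand the continuous and discrete one-step maps to second order in $\eta$, verify that the coefficients agree (in particular that the variance-type term $\tfrac{1}{m}\sum_j[\hat f_j,[\hat f_j,\cdot]]-[\hat f,[\hat f,\cdot]]$ on the discrete side is absorbed exactly by $\mL_{NOISE}$ on the continuous side), control the $O(\eta^3)$ remainders via the smoothness hypotheses through nested-commutator estimates (the paper packages these as Lemmas~\ref{lem:nested-commutator-smoothness-preservation} and~\ref{lem:channel-prod-bounded-derivative}), accumulate over $N=T/\eta$ steps, and invoke Lemma~\ref{lem:td}.

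The one substantive difference is the propagation step. The paper subtracts the two second-order expansions directly to obtain $e_{k+1}=\mR[e_k]+O(\eta^3)$ and then shows $\tracenorm{\mR[e_k]}\le(1+C_1\eta)\tracenorm{e_k}$; this bound requires the \emph{discrete} iterates $\rho_k$ to remain smooth (via Lemma~\ref{lem:constant-Hamiltonian-smoothness-preservation}), since $\mR$ contains $\mL_{GD}^2$ acting on $e_k=\tilde\rho_{k\eta}-\rho_k$. Your telescoping instead routes the propagated error through the exact CPTP map $\Lambda_{dSQHD}(1,\eta)$, whose trace-norm contractivity is free, giving the simpler recursion $\tracenorm{e_{k+1}}\le\tracenorm{e_k}+O(\eta^3)$. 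This is a cleaner accumulation argument and avoids the $(1+C_1\eta)^N$ factor. Note, however, that you still need a smoothness-preservation statement of the same flavor to control the discrete-side Taylor remainder: the Lagrange form evaluates the third $\eta$-derivative at some intermediate $\eta'\in(0,\eta)$, where the state is $U(\eta')\tilde\rho_{k\eta}U(\eta')^\dagger$ rather than $\tilde\rho_{k\eta}$ itself, so Lemma~\ref{lem:constant-Hamiltonian-smoothness-preservation} (or its analogue for each Trotter factor) is still doing work behind the scenes.
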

Theorem \ref{thm:order-2-approx} (in the broadest sense) is a quantization of \cite[Theorem 1]{Li2017StochasticME}.
Before we present the proof of Theorem \ref{thm:order-2-approx}, we state a few useful lemmas and prove them.  

\begin{lemma}
If $\{\ket{\psi_1},\ldots,\ket{\psi_m}\}$ are quantum states smooth up to order $K$, then $\rho=\sum_{j=1}^m p_j \ket{\psi_j}\bra{\psi_j}$ is a mixed quantum state smooth up to order $K$.
\end{lemma}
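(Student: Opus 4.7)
The plan is direct: this lemma follows from the linearity of differentiation together with the subadditivity of the trace norm and the factorization of the trace norm on rank-1 operators. I would carry out three short steps.

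First, I would unpack the definition of smoothness for a mixed state. I need to show, for every $k = 0, 1, \ldots, K$ and every multi-index $(j_1, \ldots, j_k) \in \{1,\ldots,d\}^k$, that $\prod_{l=1}^{k}\nabla_{j_l}\rho$ exists and has bounded trace norm. Because $\nabla_{j_l}$ acts on operators by left-multiplication, and the sum $\sum_{j=1}^m p_j \ket{\psi_j}\bra{\psi_j}$ is finite, differentiation passes through the sum and only touches the ket factor:
\begin{align*}
\prod_{l=1}^{k}\nabla_{j_l}\rho = \sum_{j=1}^m p_j \Bigl(\prod_{l=1}^{k}\nabla_{j_l}\ket{\psi_j}\Bigr)\bra{\psi_j}.
\end{align*}
Existence of each summand is immediate from the assumed smoothness of each $\ket{\psi_j}$ up to order $K$.

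Second, I would apply the triangle inequality for the trace norm, followed by the identity $\tracenorm{\ket{a}\bra{b}}=\opnorm{\ket{a}}\opnorm{\ket{b}}$ for rank-1 operators (verified by a one-line singular value computation), to get
\begin{align*}
\tracenorm{\prod_{l=1}^{k}\nabla_{j_l}\rho} \leq \sum_{j=1}^m p_j \,\opnorm{\prod_{l=1}^{k}\nabla_{j_l}\ket{\psi_j}}\cdot \opnorm{\ket{\psi_j}}.
\end{align*}
Third, I would apply the smoothness bound $C$ both to the $k$-th derivative factor and, using the $k=0$ case of the same hypothesis, to $\opnorm{\ket{\psi_j}}$ itself. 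Using $\sum_{j=1}^m p_j = 1$, this yields a uniform bound of order $C^2$ valid for all $k\leq K$ and all multi-indices, which is exactly the required smoothness of $\rho$.

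The main obstacle is not mathematical but notational: one must check that the paper's definition of smoothness for a mixed quantum state treats $\nabla_{j_l}\rho$ as pure left-multiplication by the operator $\nabla_{j_l} = \int_x \tfrac{\partial}{\partial x_j}\ket{x}\bra{x}\,\d x$, in contrast with the more general operator definition which allows derivatives on both sides. Under this reading the bra $\bra{\psi_j}$ is undifferentiated, which is what makes the factorization clean; had both sides been differentiated, one would need to expand by the Leibniz rule and bound each of $2^k$ cross-terms, but the conclusion would be the same up to a combinatorial constant.
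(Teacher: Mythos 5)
Your proposal is correct and follows essentially the same route as the paper's proof: both pass the derivatives through the finite convex combination onto the ket factor, use the rank-1 identity $\tracenorm{\ket{a}\bra{b}}=\opnorm{\ket{a}}\opnorm{\ket{b}}$, and conclude via the smoothness bounds on each $\ket{\psi_j}$. Your reading of $\nabla_{j_l}\rho$ as left-multiplication is also the one the paper uses, so no Leibniz expansion is needed.
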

\begin{proof}
To start with, for any operator $A$,$\tracenorm{A\ket{\psi}\bra{\psi}}=\opnorm{A\ket{\psi}}\cdot\opnorm{\ket{\psi}}$. For $A\ket{\psi}=0$ both side equal $0$. Else $\operatorname{rank}(A\ket{\psi}\bra{\psi})=1$, and its only non-zero singular value is $\opnorm{A\ket{\psi}}\cdot\opnorm{\ket{\psi}}$.

For $k=0,1,\ldots,K$, $\tracenorm{\prod_{l=1}^{k}\nabla_{j_l} \rho}=\sum_{j=1}^m p_j \tracenorm{\prod_{l=1}^{k}\nabla_{j_l}\ket{\psi_j}\bra{\psi_j}}$, and
\begin{align}
\tracenorm{\prod_{l=1}^{k}\nabla_{j_l}\ket{\psi_j}\bra{\psi_j}}
\leq \opnorm{\prod_{l=1}^k\nabla_{j_l}\ket{\psi_j}}\opnorm{\ket{\psi_j}}
\leq C_j,
\end{align}
where $\ket{\psi_j}$ is smooth up to order $K$ with bound $C_j$. Then $\tracenorm{\prod_{l=1}^{k}\nabla_{j_l} \rho}\leq \sum_j p_j C_j$ for $k=0,1,\ldots,K$, and $\rho$ is smooth up to order $K$ with bound $\sum_j p_j C_j$.
\end{proof}

\begin{lemma}\label{lem:constant-Hamiltonian-smoothness-preservation}
Let $H=\Delta+\hat f$, where $\hat f=\int_{x\in\mathbb{R}^d} f(x)\ket{x}\bra{x}\d x$. $f:\mathbb{R}^d\to\mathbb{R}$ is smooth up to order $n$. There exists a non-zero period vector $T\in\mathbb{R}^d$ such that for all $x\in\mathbb{R}^d$, $f(x)=f(x+T)$. Then for any mixed quantum state $\rho_0$ smooth up to order $n$, $\rho_1=e^{-iH}\rho_0 e^{iH}$ is smooth up to order $n$.
\end{lemma}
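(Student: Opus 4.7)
The plan is to move every $\nabla_j$ past $e^{-iH}$ by a Duhamel-type commutation identity, so that each derivative ends up acting directly on $\rho_0$ at the cost of introducing bounded multiplication operators built from derivatives of $f$. For the key identity, since $[\Delta,\nabla_j]=0$ and a direct computation gives $[\hat f,\nabla_j]=-\widehat{\partial_j f}$, we have $[H,\nabla_j]=-\widehat{\partial_j f}$. Applying the Duhamel formula to $\frac{\d}{\d s}\bigl(e^{isH}\nabla_j e^{-isH}\bigr)=-ie^{isH}\widehat{\partial_j f}\,e^{-isH}$, integrating over $s\in[0,1]$, and multiplying on the left by $e^{-iH}$ yields
\begin{equation*}
\nabla_j e^{-iH} \;=\; e^{-iH}\nabla_j \;-\; i\int_0^1 e^{-i(1-s)H}\,\widehat{\partial_j f}\,e^{-isH}\,\d s.
\end{equation*}
By smoothness and periodicity of $f$, $\widehat{\partial_j f}$ is a bounded multiplication operator, so the integral term is a bounded operator with operator norm at most $\opnorm{\widehat{\partial_j f}}$.

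Next I would iterate the identity inductively from right to left through $D_k := \nabla_{j_k}\cdots\nabla_{j_1}$ for $k\le n$, producing a finite-sum expansion
\begin{equation*}
D_k\, e^{-iH} \;=\; \sum_\alpha B_\alpha\, D_\alpha,
\end{equation*}
where each $D_\alpha$ is a product of at most $k$ derivatives $\nabla_{j'_l}$, and each $B_\alpha$ is a finite product (with iterated integrals over parameters in $[0,1]$) of unitaries $e^{-isH}$ interleaved with multiplication operators $\widehat{\partial^\beta f}$ with $|\beta|\le n$. The key combinatorial fact is that every commutation step either keeps a $\nabla$ in place or replaces it by a bounded $\widehat{\partial f}$ operator; no derivative ever crosses to the right side of the unitaries. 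Since $f$ is smooth up to order $n$, every $\opnorm{\widehat{\partial^\beta f}}$ with $|\beta|\le n$ is finite, so each $\opnorm{B_\alpha}$ and the number of terms $N_k$ are bounded by constants depending only on $n$ and the smoothness bound of $f$. Combining with $\rho_1 = e^{-iH}\rho_0 e^{iH}$ gives $D_k\rho_1 = \sum_\alpha B_\alpha(D_\alpha\rho_0)e^{iH}$; since $e^{iH}$ is unitary and $\tracenorm{AB}\leq\opnorm{A}\tracenorm{B}$,
\begin{equation*}
\tracenorm{D_k\rho_1} \;\le\; \sum_\alpha \opnorm{B_\alpha}\tracenorm{D_\alpha\rho_0}\;\le\; N_k\, K_f\, C,
\end{equation*}
where $C$ is the smoothness bound of $\rho_0$ at orders $\le n$ and $K_f$ bounds each $\opnorm{B_\alpha}$. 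Uniformity in $k\le n$ and in the multi-index shows $\rho_1$ is smooth up to order $n$.

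I expect the main difficulty to be careful bookkeeping of the iterated Duhamel expansion: every application of the commutation identity splits a term into two (one with the $\nabla$ commuted through and one with a bounded correction), and within each correction another $\nabla_j$ must in turn be commuted through both a unitary and a $\widehat{\partial^\beta f}$, introducing both new $\widehat{\partial^{\beta'} f}$ factors (via $\nabla_j \widehat{g} = \widehat{g}\nabla_j + \widehat{\partial_j g}$) and iterated integrals. I must verify by induction that every resulting term retains the structural form (bounded operator whose norm depends only on $f$'s smoothness up to order $n$)$\,\cdot\,$(at most $k$ derivatives acting on the far right of the unitaries), with no derivative ever migrating past $\rho_0$ to the right. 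The periodicity hypothesis on $f$ plays its role by ensuring that $H=\Delta+\hat f$ is essentially self-adjoint (so $e^{-iH}$ makes sense as a unitary on the Hilbert space under consideration) and by making the sup norms of $\partial^\beta f$ that enter $K_f$ finite.
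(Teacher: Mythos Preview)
Your Duhamel/commutator strategy is sound and gives a self-contained proof. The commutator $[H,\nabla_j]=-\widehat{\partial_j f}$ and the resulting identity $\nabla_j e^{-iH}=e^{-iH}\nabla_j-i\int_0^1 e^{-i(1-s)H}\widehat{\partial_j f}\,e^{-isH}\,\d s$ are correct, and your inductive scheme for pushing each $\nabla_{j_l}$ to the right through interleaved unitaries and multiplication operators (using $\nabla_j\hat g=\hat g\nabla_j+\widehat{\partial_j g}$) does produce a finite sum of terms of the form $B_\alpha D_\alpha$ with $|\alpha|\le k$ and $B_\alpha$ bounded by products of $\sup|\partial^\beta f|$, $|\beta|\le k\le n$. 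The counting is right: at most $k$ derivatives can land on a single $\widehat{\partial f}$ factor, so only derivatives of $f$ up to order $n$ appear, matching the hypothesis. One point to tighten in the write-up: the Duhamel identity presupposes that $e^{-isH}$ preserves the domain of $\nabla_j$ for $s\in[0,1]$, which is close to what you are proving; the standard fix is to argue first on a dense core of very smooth vectors (or approximate $f$ by trigonometric polynomials, where $H$ is bounded on every Sobolev level) and then pass to the limit using the a priori bound you derive.

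This is a genuinely different route from the paper. The paper does not prove the lemma at all: it simply invokes \cite[Lemma~6.2]{Bourgain1999} via \cite{An2021TimeDU}, noting that the periodic extension of $f$ from $\mathcal C$ to $\mathbb R^d$ places one in the torus setting where Bourgain's Sobolev-preservation result for the linear Schr\"odinger propagator with smooth potential applies. Your argument is more elementary and transparent---it works directly with trace-norm bounds and never needs the Sobolev machinery---at the cost of the bookkeeping you anticipate. The cited approach buys brevity and a sharper statement (uniform-in-time $H^s$ bounds with explicit growth rates), while yours buys self-containment and makes clear exactly which derivatives of $f$ are used.
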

The lemma follows directly form \cite[Lemma 6.2]{Bourgain1999}, according to \cite{An2021TimeDU}. The function we consider, namely function $f: \mathcal{C} \to \mathbb{R}$ smooth up to order $n$, satisfies the conditions of Lemma \ref{lem:constant-Hamiltonian-smoothness-preservation}. We can extend the domain of $f$ from $\mathcal{C}$ to $\mathbb{R}^d$ in such a way that the extension is periodic while preserving the smoothness condition.

\begin{lemma}\label{lem:nested-commutator-smoothness-preservation}
Consider the nested commutators
\begin{align}
\mS_n
=\left\{[A_1,[A_2,\cdots[A_{n-1},[A_n,\rho]]\cdots]], A_j\in\{\nabla_1^2,\ldots,\nabla_d^2,\hat f_1,\ldots,\hat f_m\},\sum_{j=1}^n[A_j\in\{\nabla^2_1,\ldots,\nabla^2_d\}]\leq k\right\},
\end{align}
where the operator $\rho$ is smooth up to order $2k$, and operators $\hat g=\int_x g(x)\ket{x}\bra{x}\d x$ are defined for functions $g\in\{f_1,\ldots,f_m\}$ smooth up to order $2k$.

For any $S_n\in\mS_n$, there exists constant $C>0$ such that $\tracenorm{S_n}\leq C\tracenorm{\rho}$.
\end{lemma}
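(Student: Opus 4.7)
The plan is to expand the $n$-fold nested commutator into a sum of at most $2^n$ terms, each of the form $B_L \rho B_R$ where $B_L$ and $B_R$ are products (in some order) of the operators from $\{A_1,\ldots,A_n\}$ that ended up on the left or right of $\rho$ after repeatedly applying $[X,Y]=XY-YX$. Since each $A_j$ is either $\nabla_{i_j}^2$ or a multiplication operator $\hat f_{l_j}$, each $B_L$ (and similarly $B_R$) is an alternating product of derivatives and multiplication operators. By the total constraint $\sum_j[A_j\in\{\nabla_1^2,\ldots,\nabla_d^2\}]\leq k$, the total number of derivatives appearing across $B_L$ and $B_R$ is at most $2k$.

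Next, I would normalize each term by pushing all of the differential operators in $B_L$ to the side adjacent to $\rho$, and similarly for $B_R$. The key algebraic identity is the Leibniz rule in operator form, $\nabla_i \hat g = \hat g\,\nabla_i + \widehat{\nabla_i g}$, applied iteratively. Each application reduces the number of derivatives by one while producing a new multiplication operator given by a derivative of some $f_{l_j}$. Repeated use yields a finite expansion
\begin{equation*}
B_L = \sum_\alpha \hat g_\alpha\, D_\alpha,\qquad B_R = \sum_\gamma D'_\gamma\, \hat h_\gamma,
\end{equation*}
where each $D_\alpha,D'_\gamma$ is a pure product of $\nabla_i$'s, whose combined degree is at most $2k$, and each $\hat g_\alpha,\hat h_\gamma$ is multiplication by a finite product of partial derivatives of the $f_j$'s of total order at most $2k$. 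Because the $f_j$ are assumed smooth up to order $2k$, each $\hat g_\alpha$ and $\hat h_\gamma$ has operator norm bounded by a uniform constant.

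Then I would combine these with the smoothness hypothesis on $\rho$. Each rearranged term has the form $\hat g_\alpha D_\alpha \rho D'_\gamma \hat h_\gamma$, and the smoothness definition (Definition 1) applied to the operator $\rho$ up to order $2k$ directly gives
\begin{equation*}
\tracenorm{D_\alpha \rho D'_\gamma}\leq C_\rho \tracenorm{\rho},
\end{equation*}
since $\deg D_\alpha+\deg D'_\gamma\leq 2k$. Using Hölder's inequality for Schatten norms, $\tracenorm{\hat g_\alpha D_\alpha \rho D'_\gamma \hat h_\gamma}\leq \opnorm{\hat g_\alpha}\opnorm{\hat h_\gamma}\tracenorm{D_\alpha \rho D'_\gamma}$, so each term is bounded by a constant multiple of $\tracenorm{\rho}$. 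Summing over the finitely many terms in the commutator expansion and the Leibniz rearrangement yields the claim with some $C$ depending only on $n,k,d$, and the smoothness bounds of the $f_j$.

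The main obstacle is purely bookkeeping: one must verify that the Leibniz push-through never produces a term with more than $2k$ derivatives surrounding $\rho$, and that every multiplication factor that appears is a partial derivative of the $f_j$'s of order at most $2k$ (so that the smoothness hypothesis on the $f_j$ suffices to bound it). This follows because each application of $\nabla_i \hat f_{l_j} = \hat f_{l_j}\nabla_i + \widehat{\nabla_i f_{l_j}}$ either preserves the number of $\nabla$'s or strictly reduces it by one while adding a single derivative to an $f_{l_j}$, so the invariants ``total number of $\nabla$'s $\leq 2k$'' and ``total derivative order on any single $f_{l_j}$ factor $\leq 2k$'' are both maintained throughout the expansion.
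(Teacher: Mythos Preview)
Your proposal is correct and follows essentially the same approach as the paper's proof: expand the nested commutator into a sum of terms $O_L\rho O_R$, use the Leibniz rule to push all derivative factors adjacent to $\rho$ so that each side becomes a sum of (bounded multiplication operator)$\times$(pure derivative), and then invoke the smoothness of $\rho$ to bound $\tracenorm{D_\alpha\rho D'_\gamma}$ together with Hölder to absorb the multiplication factors. The paper carries out the rearrangement via the second-order identity $\nabla^2\hat g=\sum_{j=0}^2\binom{2}{j}\hat g^{(j)}\nabla^{2-j}$ and first treats $d=1$, but this is equivalent to your iterated first-order Leibniz rule and your bookkeeping of the invariants is exactly what is needed.
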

It is worth mentioning that Lemma \ref{lem:nested-commutator-smoothness-preservation} applies not only to mixed quantum states (which are semidefinite positive with trace norm $1$) but also to operators (which may be indefinite with arbitrary trace norm).

\begin{proof}
The target $S_n=[A_1,\cdots[A_n,\rho]\cdots]$ is the linear combination of $O_L\rho O_R$ where $O_L$ and $O_R$ are products of some of the operators $A_k$. 

Consider a generic product $O_L = A_{i_1} A_{i_2} \cdots A_{i_r}$, where $i_1<i_2<\cdots<i_r,1\leq r\leq n$.
We consider the $d=1$ scenario. In this scenario, $O_L$ is the alternation of at most $k$ Laplace operators $\nabla^2$ and composite diagonal operators $\hat g\in\{\hat f_1,\ldots,\hat f_m\}^{r}$. We insert $\hat 1 = \int_x \ket{x}\bra{x}\d x$ between consecutive Laplace operator, then
\begin{align}
O_L=\nabla^{2} \hat g_1 \nabla^{2} \hat g_2 \cdots \nabla^{2} \hat g_k,g_j\in\{\hat 1,\hat f_1,\ldots,\hat f_m\}^{r}.
\label{eqn:appendix-B-interleaved-observable}
\end{align}
Notice that
\begin{align}
\nabla^2 \hat g = \sum_{j=0}^2 {2\choose j}\hat g^{(j)}\nabla^{2-j},
\end{align}
where $\hat g^{(j)}=\widehat{\nabla^j g}=\int_x \left(\frac{\d^j g}{\d x^j}\right)\ket{x}\bra{x}\d x$.
Plug in Eq.\eqref{eqn:appendix-B-interleaved-observable}, we notice that $O_L$ is the linear combination of
\begin{align}
\hat g_1^{(j_1)}\nabla^{4-j_1} \hat g_2 \nabla^{2} \hat g_3 \cdots \nabla^{2} \hat g_k, j_1\leq 2.
\end{align}
We repeat the plugging-in $k$ times, and we know $O_L$ is the linear combination of
\begin{align}
\hat g_1^{(j_1)}\hat g_2^{(j_2)}\cdots \hat g_k^{(j_k)}\nabla^{j_{k+1}},0\leq j_l\leq 2n,\sum_{l=1}^{k+1}j_l= 2n,j_{k+1}\leq 2k.
\label{eqn:appendix-B-lemma-5-pushing-result}
\end{align}
Since functions $\{1,f_1,\ldots,f_m\}$ are smooth up to order $2n$ and thus their composites $g_1,\ldots,g_k$ are also smooth up to order $2n$, we know $G=\hat g_1^{(j_1)}\hat g_2^{(j_2)}\cdots \hat g_k^{(j_k)}$ has bounded operator norm. Therefore, $O_L=\sum_{G_1} G_1\nabla^{k(G_1)}$. Similarly, $O_R=\sum_{G_2} G_2 \nabla^{k(G_2)}$. Notice that there are at most $k$ Laplace operators $\nabla^2$ from $\{A_1,\ldots,A_n\}$, then $k(G_1)+k(G_2)\leq 2k$ for any $G_1,G_2$. By the smoothness of the operator $\rho$, there exists constant $C_1>0$ such that $\tracenorm{\nabla^{k(G_1)}\rho\nabla^{k(G_2)}}\leq C_1\tracenorm{\rho}$.
Therefore, there exists constant $C_2>0$ such that
\begin{align}
\tracenorm{O_L\rho O_R}
&\leq \sum_{G_1,G_2}\tracenorm{G_1 \nabla^{k_1}\rho\nabla^{k_2} G_2} \\
&\leq \sum_{G_1,G_2}\|G_1\| \|\nabla^{k_1}\rho\nabla^{k_2}\|_1 \|G_2\|\leq C_2\tracenorm{\rho}, \\
\end{align}
and constant $C_3>0$ such that
\begin{align}
\tracenorm{S_n}
&\leq \sum_{O_L,O_R} \tracenorm{O_L\rho O_R}\leq C_3\tracenorm{\rho}.
\label{eqn:appendix-B-lemma-5-rest-end}
\end{align}
Therefore, we have proved the lemma when $d=1$.

For the cases of arbitrary $d$, Eq.\eqref{eqn:appendix-B-interleaved-observable} becomes
\begin{align}
& O_L=\left(\nabla^2_{j_1}\right) \hat g_1 \left(\nabla^2_{j_2}\right) \hat g_2 \cdots \left(\nabla^2_{j_k}\right) \hat g_k, \\
& g_j\in\{\hat 1,\hat f_1,\ldots,\hat f_m\}.
\end{align}
and repeating the "pushing" process in that push $\hat g\in\{\hat 1,\hat f_1,\ldots,\hat f_m\}$ to the front leads to the same result in Eq.\eqref{eqn:appendix-B-lemma-5-pushing-result} (except that operator $\hat g^{(j_l)}_l$ replaced by $\hat g^{(j_l)}_l[{\mathrm{idx}_1,\ldots,\mathrm{idx}_{j_l}}]$ for specifying the indices of the derivatives), and the argument from Eq.\eqref{eqn:appendix-B-interleaved-observable} to \eqref{eqn:appendix-B-lemma-5-rest-end} remains unchanged.
Thus we have proved the lemma.
\end{proof}

We note that the Lindbladian $\mL_{LS}(t)$ \eqref{eqn:LME-for-SQHD} can be expressed as a linear combination of time-independent operators with time-dependent coefficients,
\begin{align}
\mL_{NOISE}&=e^{2\chi(t)}\tilde\mL_{NOISE}, \\
\mL_{GD}&=e^{\psi(t)}\tilde\mL_K+e^{\chi(t)}\tilde\mL_P, \\
\tilde\mL_K[\sigma]&=-i\left[-\frac{1}{2}\Delta, \sigma\right], \\
\tilde\mL_P[\sigma]&=-i\left[\hat f, \sigma\right],\tilde\mL_{P,j}[\sigma]=-i\left[\hat f_j, \sigma\right], \\
H_j&= e^{\psi(t)}\left(-\frac{1}{2}\Delta\right)+e^{\chi(t)}\hat f_j,
\end{align}
therefore
\begin{align}
\mL_{LS}(t)=u(t)\left(e^{\psi(t)}\tilde\mL_K+e^{\chi(t)}\tilde\mL_P\right) + u(t)^2\eta e^{2\chi(t)}\tilde\mL_{NOISE}.
\end{align}

\begin{lemma}\label{lem:channel-prod-bounded-derivative}
For any $k\in\mathbb{N}_+$ and $t\in[0,T]$, if $\rho$ is an operator smooth up to order $4k$ and $f_j,j=1,\ldots,m$ are smooth up to order $2k$, then
\begin{align}
\tracenorm{\mB_1\mB_2\cdots \mB_k[\rho]}<\infty,\mB_i\in\{\mI,\tilde\mL_{NOISE},\tilde\mL_K,\tilde\mL_P\}.
\end{align}
\end{lemma}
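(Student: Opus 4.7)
The plan is to reduce the claim to Lemma~\ref{lem:nested-commutator-smoothness-preservation} by fully expanding $\mB_1\mB_2\cdots\mB_k[\rho]$ into a finite linear combination of nested commutators. From the definitions, $\mI$ contributes no commutator, $\tilde\mL_K[\sigma]=\frac{i}{2}\sum_{j=1}^d[\nabla_j^2,\sigma]$ contributes one Laplacian commutator, $\tilde\mL_P[\sigma]=-\frac{i}{m}\sum_{j=1}^m[\hat f_j,\sigma]$ contributes one potential commutator, and $\tilde\mL_{NOISE}[\sigma]=\frac{1}{2}\bigl([\hat f,[\hat f,\sigma]]-\frac{1}{m}\sum_{j=1}^m[\hat f_j,[\hat f_j,\sigma]]\bigr)$ contributes two nested potential commutators. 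Recursively applying this expansion writes $\mB_1\mB_2\cdots\mB_k[\rho]=\sum_\alpha c_\alpha S_\alpha$, where each $S_\alpha$ is a nested commutator of the form $[A_1,[A_2,\cdots[A_n,\rho]\cdots]]$ with operators $A_j$ drawn from $\{\nabla_1^2,\ldots,\nabla_d^2,\hat f_1,\ldots,\hat f_m\}$, and the scalar coefficients $c_\alpha$ are uniformly bounded.

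Next I would verify that every term $S_\alpha$ satisfies the hypotheses of Lemma~\ref{lem:nested-commutator-smoothness-preservation} with its inner parameter set equal to $k$. Since $\tilde\mL_{NOISE}$ is the only generator producing two commutators per application, the depth $n$ of any $S_\alpha$ is at most $2k$. Since only $\tilde\mL_K$ contributes Laplacian factors and each such application produces at most one, the number of $\nabla_j^2$ operators appearing in any $S_\alpha$ is at most $k$. With $\rho$ smooth up to order $4k$ and each $f_j$ smooth up to order $2k$, both smoothness budgets exceed what the lemma requires, so one obtains $\tracenorm{S_\alpha}\leq C_\alpha\tracenorm{\rho}$ for some constant $C_\alpha>0$. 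Summing the finite collection of terms and collecting coefficients yields $\tracenorm{\mB_1\mB_2\cdots\mB_k[\rho]}\leq C\tracenorm{\rho}<\infty$ for a single constant $C>0$, since $\tracenorm{\rho}$ is finite by the smoothness hypothesis.

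The main obstacle I anticipate is purely combinatorial: carefully tracking the recursive expansion so that every term in the sum simultaneously satisfies both the depth bound $n\leq 2k$ and the Laplacian count bound at most $k$, so that a single parameter choice in Lemma~\ref{lem:nested-commutator-smoothness-preservation} works for the entire sum at once. A secondary check is that the extra derivatives on the $f_j$ generated by the ``pushing'' argument in the proof of Lemma~\ref{lem:nested-commutator-smoothness-preservation}, which moves Laplacians past potentials to yield factors of the form $\hat g^{(j_l)}$, are still covered by the assumed smoothness order $2k$ on each $f_j$, so that the operator-norm bounds used there apply uniformly across all terms in the expansion.
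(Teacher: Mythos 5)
Your proposal is correct and follows essentially the same route as the paper: expand $\mB_1\cdots\mB_k[\rho]$ by linearity into nested commutators with entries from $\{\nabla_1^2,\ldots,\nabla_d^2,\hat f_1,\ldots,\hat f_m\}$, observe that the nesting depth is at most $2k$ with at most $k$ Laplacian factors, and invoke Lemma~\ref{lem:nested-commutator-smoothness-preservation} term by term. Your version is in fact a bit more explicit than the paper's (which keeps $\hat f$ as a separate symbol rather than expanding it into the $\hat f_j$, and states the count only in passing), but the argument is the same.
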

\begin{proof}
Since
\begin{align}
\tilde\mL_{NOISE}[\sigma]
&=\frac{1}{2}[\hat{f},[\hat{f},\sigma]]-\frac{1}{2m}\sum_{l=1}^m[\hat{f}_l,[\hat{f}_l,\sigma]],\\
\tilde\mL_K[\sigma]
&=\frac{i}{2}\sum_{j=1}^d [\nabla^2_j,\sigma],
\end{align}
$\mB_1\mB_2\cdots \mB_j[\rho]$ again is the linear combination of
\begin{align}
    [B_{2j},[\cdots[B_2,[B_1,\rho]]\cdots]],B_i\in\{\nabla^2_1,\ldots,\nabla_d^2,\hat{f},\hat{f}_1,\ldots,\hat{f}_m\}.
\end{align}
where $\{\nabla^2_1,\ldots,\nabla_d^2\}$ occur in $\{B_1,\ldots,B_{2j}\}$ at most $j\leq k$ times. By Lemma \ref{lem:nested-commutator-smoothness-preservation}, the nested commutator has bounded trace norm.
\end{proof}

With these lemmas, we can prove Theorem \ref{thm:order-2-approx}. In the following, for $\alpha\in\mathbb{N}$ we use $O(\eta^\alpha)$ to denote an operator $\sigma\in \mL(\mH)$ such that $\tracenorm{\sigma}<C\eta^\alpha$ for some constant $C$.

Now we present the proof of Theorem \ref{thm:order-2-approx}.
\begin{proof}
Consider $t_k=k\eta,k=0,\ldots,N-1$, the Taylor series expansion of $\tilde\rho_{t_{k+1}}$ around $t_{k}$ up to the second order is
\begin{align}
\tilde\rho_{t_{k+1}} &= \tilde\rho_{t_k} + \eta \frac{\d \tilde\rho_t}{\d t}\bigg|_{t=t_k} + \frac{\eta^2}{2} \frac{\d^2 \tilde\rho_t}{\d t^2}\bigg|_{t=t_k} +  \frac{\eta^3}{6}\frac{\d^3 \tilde\rho_t}{\d t^3}\bigg|_{t=r_k}\\
\end{align}
where $r_k\in[t_k,t_{k+1}]$.
Let $c_1(t)=u(t)e^{\psi(t)},c_2(t)=u(t)e^{\chi(t)},c_3(t)=u(t)^2e^{2\chi(t)}\eta$, then functions $c_1,c_2,c_3:[0,T]\to\mathbb{R}$ are all smooth up to order $3$. Let
\begin{align}
\mA_1 &= \mL_{LS} = c_1(t)\tilde\mL_K+c_2(t)\tilde\mL_P+c_3(t)\tilde\mL_{NOISE},\\
\mA_{j+1}&=\left(\frac{\d^j c_1}{\d t^j}\tilde\mL_K+\frac{\d^j c_2}{\d t^j}\tilde\mL_P+\frac{\d^j c_3}{\d t^j}\tilde\mL_{NOISE}\right) \\
&+ \mA_j\left(c_1(t)\tilde\mL_K+c_2(t)\tilde\mL_P+c_3(t)\tilde\mL_{NOISE}\right), j\geq 0.
\end{align}
then $\frac{\d^3 \tilde\rho_t}{\d t^3}\bigg|_{t=0} = \mA_3[\rho]$. $\mA_3$ is the linear combination of $\mB_1\mB_2\mB_3,\mB_j\in\{\mI,\tilde\mL_K,\tilde\mL_P,\tilde\mL_{NOISE}\}$. By Lemma \ref{lem:channel-prod-bounded-derivative}, their trace norm is bounded due to the smoothness condition of $\tilde\rho_t$ and $f_1,\ldots,f_m$. The corresponding coefficient is the linear combination of $c_1,c_2,c_3$ and their derivative up to order $3$; thus, these coefficients are all bounded due to the smoothness condition of $e^{\psi(t)},e^{\chi(t)}$. Therefore, $\frac{\d^3 \tilde\rho_t}{\d t^3}\bigg|_{t=r_k}=O(1)$ and
\begin{align}
\tilde\rho_{t_{k+1}}&=\tilde\rho_{t_k}
 + \eta \mathcal{L}_{LS}[\tilde\rho_{t_k}]
 + \frac{\eta^2}{2}\left(
    \dot{\mathcal{L}}_{LS}[\tilde\rho_{t_k}]
    + \mathcal{L}_{LS}^2[\tilde\rho_{t_k}]
    \right)
 + O(\eta^3).
\end{align}
Notice that $\mL_{LS}=\mL_{GD}+\eta\mL_{NOISE}$, $O(\eta^3)$ parts can be ignored in the second-order Taylor series expansion. Therefore
\begin{align}\label{eqn:ts-expansion-LS}
\tilde\rho_{t_{k+1}} &=
\left(\mI
+ \eta \mathcal{L}_{LS}
+ \frac{\eta^2}{2}\dot{\mathcal{L}}_{GD}
+ \frac{\eta^2}{2}\mathcal{L}_{GD}^2
\right)[\tilde\rho_{t_k}]
 + O(\eta^3) \\
 &= \left(\mI
+ \eta \mathcal{L}_{GD}
+ \eta^2 \mathcal{L}_{NOISE}
+ \frac{\eta^2}{2}\dot{\mathcal{L}}_{GD}
+ \frac{\eta^2}{2}\mathcal{L}_{GD}^2
\right)[\tilde\rho_{t_k}]
 + O(\eta^3).
\end{align}
In comparison,
\begin{align}
\rho_{k+1} &= \frac{1}{m}\sum_{j=1}^m U_{k,j}\rho_k U_{k,j}^\dagger.\\
\end{align}
$\rho_0=\tilde\rho_0$ is smooth up to order $6$, and by Lemma \ref{lem:constant-Hamiltonian-smoothness-preservation} we know $\rho_k,k\geq 0$ is smooth up to order $6$. We describe the unitary with the notation $H_K(t)= e^{\psi(t)}\left(-\frac{1}{2}\Delta\right), H_{P,j}(t)=e^{\chi(t)}f_j$,
\begin{align}    
U_{k,j} &=
\exp\left(-i\frac{\eta}{2} H_K\left(t_k+\frac{\eta}{2}\right)\right)
\exp\left(-i \eta H_{P,j}\left(t_k+\frac{\eta}{2}\right)\right)
\exp\left(-i\frac{\eta}{2} H_K\left(t_k+\frac{\eta}{2}\right)\right).
\end{align}
Using Taylor series expansion,
\begin{align}
\frac{\d\rho_{k+1}}{\d \eta}\bigg|_{\eta=0}
&= [-iH(t_k),\rho_k], \\
\frac{\d^2\rho_{k+1}}{\d \eta^2}\bigg|_{\eta=0}
&=
[-i\dot{H}(t_k),\rho_k]
+\frac{1}{m}\sum_{j=1}^m\left[-iH_j(t_k),[-iH_j(t_k),\rho_k]\right], 
\end{align}
and $\frac{\d^3\rho_{k+1}}{\d \eta^3}\bigg|_{\eta=\xi}$ is the linear combination of nested commutators $[A_1,\cdots[A_n,\rho_k]\cdots]$ where $A_1,\ldots, A_n\in\{\nabla^2,f_1,\ldots,f_m\}$. By Lemma \ref{lem:constant-Hamiltonian-smoothness-preservation} we know $\rho_k$ is smooth up to order $6$, then by Lemma \ref{lem:nested-commutator-smoothness-preservation}, $\frac{\d^3\rho_{k+1}}{\d \eta^3}\bigg|_{\eta=\xi}=O(1)$.
Therefore,
\begin{align}\label{eqn:ts-expansion-dSQHD}
\rho_{k+1}
&=
\rho_k
+\eta \left[-iH(t_k),\rho_k\right]+\frac{\eta^2}{2}
\left(\left[-i\dot {H}(t_k),\rho\right]-\frac{1}{m}\sum_{j=1}^m\left[H_j(t_k),[H_j(t_k),\rho_k]\right]\right)
+ O(\eta^3)  \\
&= \Bigg[
\mI
+ \eta\left(\mL_{GD}+\frac{\eta}{2}\dot\mL_{GD}\right)
+ \eta^2 \left(\mL_{NOISE}+\frac{1}{2}\mL_{GD}^2\right)
\Bigg][\rho_{k}]
+ O(\eta^3) \\
&= \Bigg[
\mI
+ \eta\mL_{GD}
+ \frac{\eta^2}{2}\dot\mL_{GD}
+ \eta^2 \mL_{NOISE}+\frac{\eta^2}{2}\mL_{GD}^2
\Bigg][\rho_{k}]
+ O(\eta^3)
\end{align}
Let $e_k=\tilde\rho_{t_k}-\rho_k$, then
\begin{align}
e_{k+1} &= \Bigg[
\mI
+ \eta\mL_{GD}
+ \frac{\eta^2}{2}\dot\mL_{GD}
+ \eta^2 \mL_{NOISE}+\frac{\eta^2}{2}\mL_{GD}^2
\Bigg][e_{k}]
+ O(\eta^3),\\
&=\mR[e_k]+O(\eta^3).
\end{align}
Notice that $e_k$ is an operator smooth up to order $6$, and $\mR[e_k]$ is the linear combination of $\mB_1\mB_2[e_k]$ where $\mB_1,\mB_2\in\{\mI,\tilde\mL_{NOISE},\tilde\mL_K,\tilde\mL_{P}\}$. Therefore, by Lemma \ref{lem:channel-prod-bounded-derivative} for small enough $\eta$, there exists constant $C_1,C_2>0$ such that 
\begin{align}\label{eqn:recursive}
\tracenorm{\mR[e_k]} &\leq (1+C_1\eta)\tracenorm{e_k}, \\
\tracenorm{e_{k+1}}&\leq \tracenorm{\mR[e_k]}+C_2\eta^3.
\end{align}
Solve this recursive formula and take $T=N\eta$ as a constant, we have
\begin{align}
\tracenorm{e_k}
&\leq  \eta^3 C_2 \frac{(1+C_1\eta)^k-1}{(1+C_1\eta)-1}, \\
&\leq C_3 k\eta^3=C_3T\eta^2, \\
\tracenorm{e_k}&=O(\eta^2),
\end{align}
for some constant $C_3>0$.
Thus we have proved for $k=0,1,\ldots,N-1$
\begin{align}
\tracenorm{\Lambda_{LS}(0,k\eta)[\rho_0] - \Lambda_{dSQHD}(k,\eta)[\rho_0]}=O(\eta^2).
\end{align}
\end{proof}

We extend the definition of the SQHD algorithm to allow for an adaptive learning rate.
\begin{definition}[Adaptive Stochastic Quantum Hamiltonian Descent]
Adaptive Stochastic Quantum Hamiltonian Descent with iteration number $N$ and adaptive learning rate $\vec\eta=[\eta_0,\ldots,\eta_{N-1}]$ is defined as
\begin{align}
U_{daSQHD}(N,\vec\eta;\xi)&=\prod_{j=N-1}^{0}\left[ \exp(-i \eta_j a_j (-\nabla^2/2)) \exp(-i\eta_j b_j f_{\xi_j})\right],
\end{align}
where $\xi$ is a $N$-dimensional random vector that $\xi_j$ is independently and uniformly drawn from $\{1,\ldots,m\}$ for $j=0,\ldots,N-1$. The corresponding channel is denoted as
\begin{align}
\Lambda_{daSQHD}(N,\vec\eta)[\rho] = \E_{\xi}\left[U_{daSQHD}(N,\vec\eta;\xi)\rho U^\dagger_{daSQHD}(N,\vec\eta;\xi)\right].
\end{align}
\end{definition}

The adaptive version of Theorem \ref{thm:order-2-approx} also holds.
\begin{theorem}\label{thm:order-2-approx-adaptive}
For any initial state $\rho_0$ such that $\tilde \rho_t=\Lambda_{LS}(u,0,t)[\rho_0]$ is smooth up to order $6$, any $e^{\psi(t)},e^{\chi(t)},u(t)$ smooth up to order $3$, $f_j(x),j=1,\ldots,m$ smooth up to order $6$ and small enough learning rate $\eta>0$, the process $\tilde \rho_t=\Lambda_{LS}(u,0,t)[\rho_0]$  is an order-$2$ quantum weak approximation of the Adaptive Stochastic Quantum Hamiltonian Descent process $\rho_k=\Lambda_{daSQHD} (k,\vec\eta)[\rho_0]$ where $\vec\eta=[u(\frac{\eta}{2})\eta,u(\frac{3\eta}{2})\eta,\ldots,u(\frac{(2N-1)\eta}{2})\eta]$.
\end{theorem}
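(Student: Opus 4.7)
The plan is to mimic the proof of Theorem~\ref{thm:order-2-approx}, modified to carry the learning-rate schedule $u(t)$ through both sides of the comparison. The overall structure remains: (i) Taylor-expand the continuous process $\tilde\rho_{t_{k+1}}$ around $t_k=k\eta$ to order $\eta^2$; (ii) Taylor-expand the adaptive discrete step $\rho_{k+1}$ around $\eta=0$ to the same order; (iii) verify that the two expansions agree up to an $O(\eta^3)$ residual with bounded trace norm; and (iv) unroll the resulting recursion via a Gronwall-type argument to obtain a global error of $O(\eta^2)$.

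For the continuous expansion, since $\mL_{LS}(t)=u(t)\mL_{GD}(t)+u(t)^2\eta\,\mL_{NOISE}(t)$ and the coefficients $u, e^{\psi}, e^{\chi}$ are all smooth up to order $3$, the analogue of Eq.~\eqref{eqn:ts-expansion-LS} reads
\begin{align*}
\tilde\rho_{t_{k+1}} &= \Bigl[\mI + \eta\, u(t_k)\mL_{GD}(t_k) + \tfrac{\eta^2}{2}\dot u(t_k)\mL_{GD}(t_k) \\
&\quad + \tfrac{\eta^2}{2}u(t_k)\dot{\mL}_{GD}(t_k) + \eta^2 u(t_k)^2\mL_{NOISE}(t_k) + \tfrac{\eta^2}{2}u(t_k)^2\mL_{GD}(t_k)^2\Bigr][\tilde\rho_{t_k}]+O(\eta^3),
\end{align*}
with the $O(\eta^3)$ remainder controlled, as in Lemma~\ref{lem:channel-prod-bounded-derivative}, by the smoothness of $\tilde\rho_t$ and of $f_1,\dots,f_m$.

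For the discrete expansion, the step size at iteration $k$ is $\eta_k=u\bigl((k+\tfrac{1}{2})\eta\bigr)\eta$. I would first expand $\E_{\xi_k}[U_{k,\xi_k}\rho_k U_{k,\xi_k}^\dagger]$ as a polynomial in $\eta_k$ exactly as in the proof of Theorem~\ref{thm:order-2-approx}: the linear term gives $-i[H(t_k+\tfrac{\eta}{2}),\rho_k]$, and the quadratic term reassembles into $\tfrac{1}{m}\sum_j[-iH_j(t_k+\tfrac{\eta}{2}),[-iH_j(t_k+\tfrac{\eta}{2}),\rho_k]]$ together with the Trotter-splitting remainder coming from the non-commutation of $-\Delta/2$ and $\hat f_{\xi_k}$. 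Substituting $\eta_k = u(t_k)\eta+\tfrac{\eta^2}{2}\dot u(t_k)+O(\eta^3)$ and Taylor-expanding $H$ around $t_k$, the linear piece becomes $\eta u(t_k)\mL_{GD}(t_k)+\tfrac{\eta^2}{2}\dot u(t_k)\mL_{GD}(t_k)+\tfrac{\eta^2}{2}u(t_k)\dot\mL_{GD}(t_k)$, and the quadratic piece becomes $\eta^2 u(t_k)^2$ times the noise and $\mL_{GD}^2$ contributions, matching the continuous expansion term by term.

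The main obstacle will be the careful bookkeeping in this matching step: several distinct $O(\eta^2)$ contributions—the noise term, the $\dot u$-shift of the drift induced by the midpoint evaluation, the time derivative of $H$, the iterated drift $\mL_{GD}^2$, and any Trotter-splitting remainder—all need to be reassembled and compared with the continuous-side $O(\eta^2)$ coefficients of $\dot u\mL_{GD}$, $u\dot\mL_{GD}$, $u^2\mL_{NOISE}$, and $u^2\mL_{GD}^2$. Once this matching is in place, the remainder of the proof is mechanical: Lemmas~\ref{lem:constant-Hamiltonian-smoothness-preservation} and~\ref{lem:nested-commutator-smoothness-preservation} keep $e_k=\tilde\rho_{t_k}-\rho_k$ smooth throughout the iteration, so the local error bound $\tracenorm{e_{k+1}}\le(1+C_1\eta)\tracenorm{e_k}+C_2\eta^3$ carries over verbatim from Eq.~\eqref{eqn:recursive}, and solving it with $N\eta=T$ yields $\tracenorm{e_N}=O(\eta^2)$; Lemma~\ref{lem:td} then promotes this trace-distance bound to the claimed order-$2$ quantum weak approximation.
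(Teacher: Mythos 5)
Your proposal follows essentially the same route as the paper's own (sketch of a) proof: substitute $\eta_k = u(t_k+\tfrac{\eta}{2})\eta = u(t_k)\eta + \tfrac{\eta^2}{2}\dot u(t_k) + O(\eta^3)$ into the discrete second-order expansion, observe that the midpoint evaluation of $H$ supplies the $\tfrac{\eta^2}{2}u(t_k)\dot\mL_{GD}$ term while the expansion of $\eta_k$ supplies the $\tfrac{\eta^2}{2}\dot u(t_k)\mL_{GD}$ term, match against the continuous expansion of $\mL_{LS}(t)=u\mL_{GD}+u^2\eta\mL_{NOISE}$, and reuse the recursion of Eq.~\eqref{eqn:recursive}. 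The term-by-term matching you describe is exactly the one the paper records, so the proposal is correct and not a genuinely different argument.
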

\paragraph{Proof sketch} Let $\eta_k=u(t_k)\eta$. Eq.\eqref{eqn:ts-expansion-LS} becomes
\begin{align}
\tilde\rho_{t_{k+1}} &=
\left(\mI
+ \eta_k\left(\mathcal{L}_{GD}+\eta_k\mL_{NOISE}\right)
+ \frac{\eta^2}{2}\left(u(t_k)\dot{\mathcal{L}}_{GD}+\dot u(t_k){\mathcal{L}}_{GD}\right)
+ \frac{\eta_k^2}{2}\mathcal{L}_{GD}^2
\right)[\tilde\rho_{t_k}]
 + O(\eta^3),
\end{align}
and Eq.\eqref{eqn:ts-expansion-dSQHD} becomes
\begin{align}
\rho_{k+1}
&= \Bigg[
\mI
+ u(\frac{t_k+t_{k+1}}{2})\eta\mL_{GD}
+ \frac{u(\frac{t_k+t_{k+1}}{2})\eta^2}{2}\dot\mL_{GD}
+ (u(\frac{t_k+t_{k+1}}{2})\eta)^2 \mL_{NOISE}+\frac{(u(\frac{t_k+t_{k+1}}{2})\eta)^2}{2}\mL_{GD}^2
\Bigg][\rho_{k}]
+ O(\eta^3) \\
&= \Bigg[
\mI
+ \left(\eta_k +\dot u(t_k)\frac{\eta^2}{2}\right)\mL_{GD}
+ u(t_k)\frac{\eta^2}{2}\dot\mL_{GD}
+ \eta_k^2 \mL_{NOISE}+\frac{\eta_k^2}{2}\mL_{GD}^2
\Bigg][\rho_{k}]
+ O(\eta^3) \\
\end{align}
The two expansions match, and a recursive formula similar to Eq.\eqref{eqn:recursive} would prove the theorem.

\section{Numerical Experiment}
\label{sec:full-numerical}
\subsection{Implementation}
The numerical simulation is conducted on a space-discretized Hilbert space with $d\log_2 n_r$ qubits, where $n_r$ is the resolution of the grid. The computational basis in the Hilbert space is labeled by
\begin{align}
    \tilde\mC=\left\{[x_1,\ldots,x_d],x_j\in\left\{-1+\frac{2k+1}{n_r},k=0,1,\ldots,n_r-1\right\}\right\},
\end{align}
and the spacing is defined as $s=\frac{2}{n^r}$.

In this case, the simulated unitary is
\begin{align}
  & \tilde U_{dSQHD}(N,\eta,n_r;\xi) \\
= & \prod_{j=N-1}^{0}\left[ \exp(-i \frac{\eta}{2} a_j (-D_{d,n_r}/2)) \exp(-i\eta b_j F_{\xi_j,n_r})\exp(-i \frac{\eta}{2} a_j (-D_{d,n_r}/2))\right],
\end{align}
where
\begin{align}
F_{j,n} &= \sum_{x\in\tilde\mC} f_j(x)\ket{x}\bra{x}, \\
D_{d,n} &= \frac{1}{s^2}\sum_{j=0}^{d-1} \left(I_n^{\otimes j}\right)\otimes D_{1,n}\otimes \left(I_n^{\otimes d-1-j}\right), \\
(D_{1,n})_{ij} &= 2[j=i]-[j=(i+1)\bmod n]-[j=(i-1)\bmod n].
\end{align}
Notice that
\begin{align}
D_{1,n} &= U_{DFT,n}\Sigma_n U_{DFT,n}^\dagger, \\
\Sigma_n  &= \frac{1}{s^2}\textrm{diag}(4\sin^2(k\pi/n),k=0,\ldots,n-1), \\
(U_{DFT,n})_{jk} &= e^{-\frac{2\pi i}{n}jk},j,k=0,\ldots,n-1.
\end{align}
This allows a simplified unitary for $\tilde U_{dSQHD}$:
\begin{align}
  & \tilde U_{dSQHD}(N,\eta,n_r;\xi) \\
= & \prod_{j=N-1}^{0}\left[ \exp(-i \frac{\eta}{2} a_j (-D_{d,n_r}/2)) \exp(-i\eta b_j F_{\xi_j,n_r})\exp(-i \frac{\eta}{2} a_j (-D_{d,n_r}/2))\right] \\
= & U_{DFT,n_r}\prod_{j=N-1}^{0}\left[
\exp(-i \frac{\eta}{2} a_j (-\Sigma_{d,n_r}/2)) U_{DFT,n_r}^\dagger
\exp(-i\eta b_j F_{\xi_j,n_r})
U_{DFT,n_r}
\exp(-i \frac{\eta}{2} a_j (-\Sigma_{d,n_r}/2))
\right]
U_{DFT,n_r}^\dagger.
\end{align}
where
\begin{align}
U_{d,n_r} &= U_{DFT,n_r}^{\otimes d}, \\
\Sigma_{d,n_r} &= \frac{1}{s^2}\sum_{j=0}^{d-1} \left(I_n^{\otimes j}\right)\otimes \Sigma_{n}\otimes \left(I_n^{\otimes d-1-j}\right).
\end{align}
This makes the numerical simulation efficient because only (the exponential of) diagonal matrices and the discrete Fourier transform matrix are involved.

\subsection{Setting}
\label{sec:app-ne-setting}

\paragraph{Test functions}
The first function we consider is the rotated high-dimensional double-well function:
\begin{align}
    F_{U,s}(\vec x)=F\left(\frac{1}{s}U\vec x\right), F(\vec x)=\frac{1}{d}\sum_{j=1}^d w(x_j),
\end{align}
where $U$ is a $d$-dimensional orthogonal matrix and $w:\mathbb{R}\to\mathbb{R}$ is a smooth non-convex function with $2$ local minima and $1$ global minimum. The objective function has $2^d$ local minima and $1$ global minimum. In \cite{QHD-separation-1}, the authors prove that for a class of well-formed $w(x)$, QHD finds an approximate solution $x$ in time polynomial to dimension $d$ and approximation error $\delta$.

In the experiment, we consider the case
\begin{align}
d&=2,s=1.2,\theta\sim\textrm{Uniform}[0,2\pi], \\
U&=\begin{bmatrix}
\cos\theta & \sin\theta \\
-\sin\theta & \cos\theta
\end{bmatrix},
\end{align}
with functions $w(x)=\frac{1}{10}(x^4-16 x^2+5 x)$. The function is the 2-dimensional Styblinski-Tang function (up to the rotation $U$), and it corresponds to the setting in \cite{QHD-separation-1}, and we denote it as \texttt{dw}.

The second function we consider is the Nonlinear Least Squares function: 
\begin{align}
    f(\vec x) = \frac{1}{n_{sample}}\sum_{j=1}^{n_{sample}}(h(\vec{x};\vec{a}_j)-b_j)^2,
\end{align}
where $h:\mathbb{R}^d\times\mathbb{R}^{\bar d}\to\mathbb{R}$ is a nonlinear function.
This objective arises naturally in many machine learning applications where the goal is to fit a model to data by minimizing the discrepancy between predicted and observed values \cite{deep-learning-goodfellow}. The formulation captures a broad class of supervised learning problems, where $\vec{x}$ represents learnable parameters, $\vec{a}_j$ are input features, $b_j$ are target outputs, and $h$ denotes a (potentially highly nonlinear) prediction function.

In the experiment, we consider the case
\begin{align}
d&=2,\bar d=3, \\
h(\vec{x},\vec{y})&=\sin^2\left((y_0+\sum_{k=1}^d y_k x_k)\right).
\end{align}
We select two sets of parameters.
In the first set, $n_{sample}=40$ and $\{\tilde a_1,\ldots,\tilde a_{20}\}$ is uniformly sampled from $\{\frac{0}{6\pi},\ldots,\frac{100}{6\pi}\}$ and $\{\tilde b_1,\ldots,\tilde b_{20}\}$ is uniformly sampled from $\{\frac{0}{4\pi},\ldots,\frac{100}{4\pi}\}$. We let
\begin{align}
\{a_{jk}\}=\{\{\tilde a_1,0,\tilde b_1\},\ldots,\{\tilde a_{20},0,\tilde b_{20}\},\{0,\tilde a_1,\tilde b_1\},\ldots,\{0,\tilde a_{20},\tilde b_{20}\}\},
\end{align}
and we denote it as \texttt{sino}.
In the second set, $n_{sample}=50$ and $a_{jk}$ is uniformly sampled from $\{-20/\pi,-19/\pi,\ldots,20/\pi\}$. We denote it as \texttt{sino-alt}.

We also consider the 2-dimensional Michalewicz function
\begin{align}
f(x_1,x_2) &= \frac{1}{2}\left(w(2x_1+2)+w(2x_2+2)\right),\\
w(x)&=-\sin(x)\sin(x^2/\pi)^{20},
\end{align}
which is denoted as \texttt{mich}, and the Cube-Wave function
\begin{align}
f(x_1,x_2) &= \frac{1}{2}\left(w(2x_1)+w(2x_2)\right),\\
w(x)&=\cos(\pi x)^2+\frac{1}{4}x^4,
\end{align}
which is denoted as \texttt{cubewave}. These two test functions come from \cite{gradient-QHD} (with rescaling).
All test functions we consider are smooth up to arbitrary order by  Definition~\ref{def:smooth}, which implies the generality of our result.
The performance of algorithms on these test functions is assessed by the expected loss
\begin{align}
\E[f(x_N)-\inf f(x)],
\end{align}
and $\delta$-success probability
\begin{align}
\P\left[\frac{f(x_N)-\inf f(x)}{\sup f(x)-\inf f(x)}<\delta\right].
\end{align}
We set $\delta=0.01$ in \texttt{cubewave} and \texttt{dw}, $\delta=0.05$ in \texttt{sino-alt}, and $\delta=0.1$ in \texttt{mich} and \texttt{sino}.

\paragraph{Hamiltonian coefficients}
We consider two sets of Hamiltonian coefficients. The first one corresponds to NAGD:
\begin{align}\label{eqn:nagd-coeff}
e^{\psi(t)}=2t^{-3},e^{\chi(t)}=2t^3,u(t)=1.
\end{align}
And the second one corresponds to SGDM:
\begin{align}\label{eqn:sgdm-coeff}
e^{\psi(t)}=t^{-2},e^{\chi(t)}=2t,u(t)=\frac{1}{2}.
\end{align}
The first set promises an $O(t^{-2})$ convergence on QHD, and the second set promises an $O(t^{-1})$ convergence on SQHD.

\paragraph{Other settings}
The initial state for QHD and SQHD is set to $\frac{1}{|\tilde \mC|}\sum_{x\in\tilde\mC}\ket{x}$. For SGDM, the initial state $x_0$ is sampled from a uniform distribution on $\mC$.

The default total time and iteration number are $T=80, N=8000$ for all algorithms.
The default parameter for SQHD is SGDM Hamiltonian coefficients \eqref{eqn:sgdm-coeff}.
The default parameter for QHD is NAGD Hamiltonian coefficients \eqref{eqn:nagd-coeff}.
The parameter setting for SGDM is \cite{handbook}
\begin{align}
v_{k} & =\beta_{k} v_{k-1}+\nabla f_{j_k}(x_k), j_k\sim\textrm{Uniform}(1,\ldots,m),\\
x_{k+1} & =x_k-\gamma_k v_k,
\end{align}
where $x_0$ is given and $v_{-1}=0$, and $\beta_k=\frac{k}{k+2},\gamma_k=\frac{2\eta}{k+3}$ for $k=0,1,\ldots,N$.
For SGDM, the success probability and expectation of the objective function value are estimated using $1000$ independent runs. For SQHD, we use $10$ samples due to the high simulation cost of quantum algorithms. While more samples guarantee a more stable result, results on a few samples are enough to demonstrate the behavior of the SQHD algorithm in general.

\subsection{Validation of the approximation result}
We compare the result of the SQHD algorithm with direct simulation of Stochastic Quantum Hybrid Dynamics \eqref{eqn:LME-for-SQHD} in Figure \ref{fig:validation-thm-2} to validate the approximation result (Theorem \ref{thm:order-2-approx}).
\begin{figure}[htbp]
    \centering
    \includegraphics[width=\linewidth]{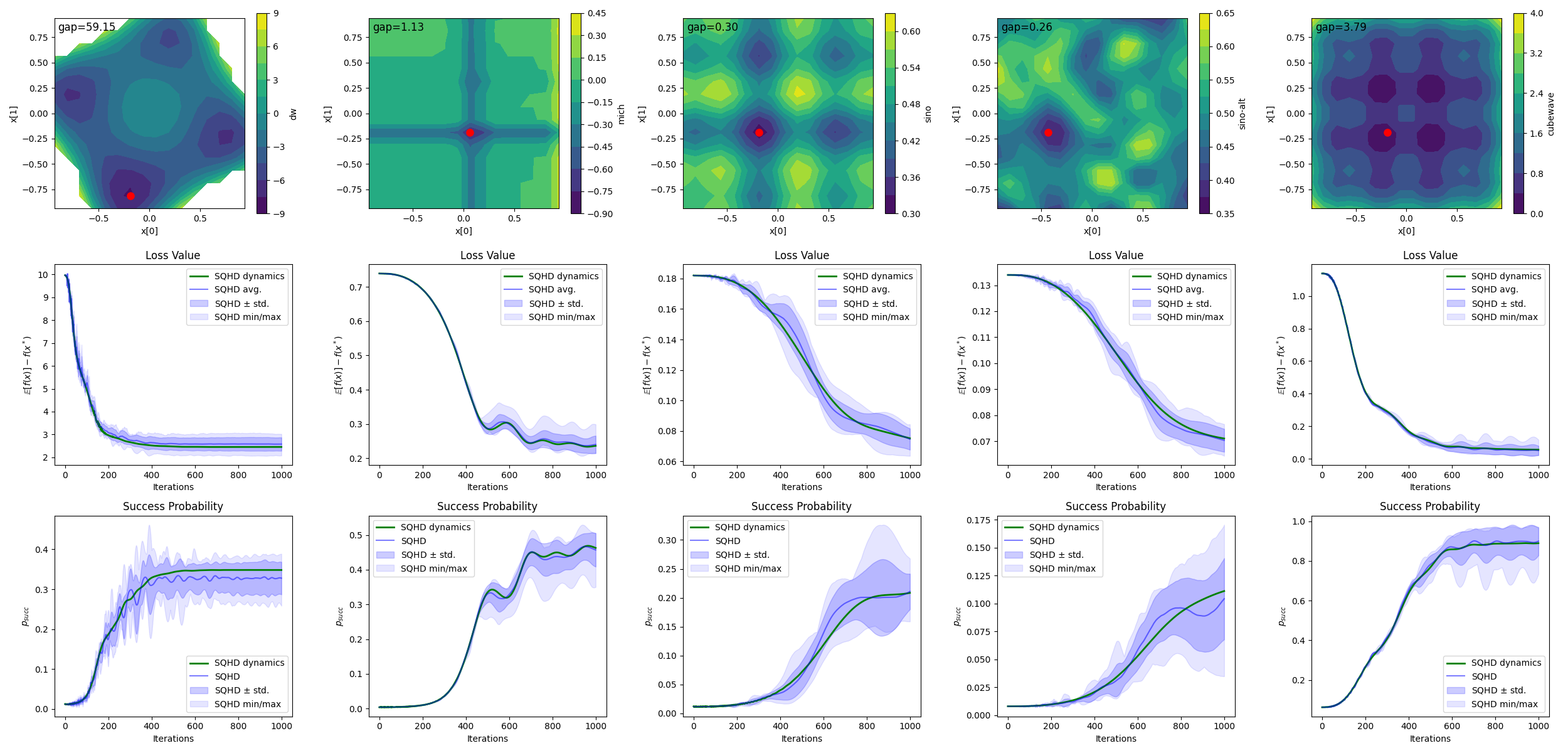}
    \caption{Results of Stochastic Quantum Hamiltonian Descent and Stochastic Quantum Hybrid Dynamics on all test functions with resolution $n_r=16$. The optimization parameter is set $T=10,N=1000$ with \texttt{SGDM-style} Hamiltonian coefficients.}
    \label{fig:validation-thm-2}
\end{figure}

\subsection{Rule of learning rate}
\label{sec:rule-of-lr}
We run SQHD, QHD, and SGDM with the default setting on all test functions with resolution $n_r=32$ except that the iteration number is set to $N=8000,16000,32000$. The results in Figure \ref{fig:rule-of-lr} show that SQHD with a smaller learning rate has smaller fluctuation.
The effect of learning rate is most obvious on the \texttt{sino-alt} function. 
For the default setting $N=8000,\eta=10^{-2}$, SQHD performs much worse than QHD. But as the learning rate becomes smaller, the performance of SQHD improves and surpasses the performance of QHD when $N=32000,\eta=2.5\times 10^{-3}$ regarding success probability. We speculate that the gradient noise $\sigma_f^*$ of the \texttt{sino-alt} objective function is relatively large, and
For the default setting, there is a mismatch between the gradient noise $\sigma_f^*$ and the learning rate $\eta$.
\begin{figure}[hp]
    \centering
    \includegraphics[width=0.65\linewidth]{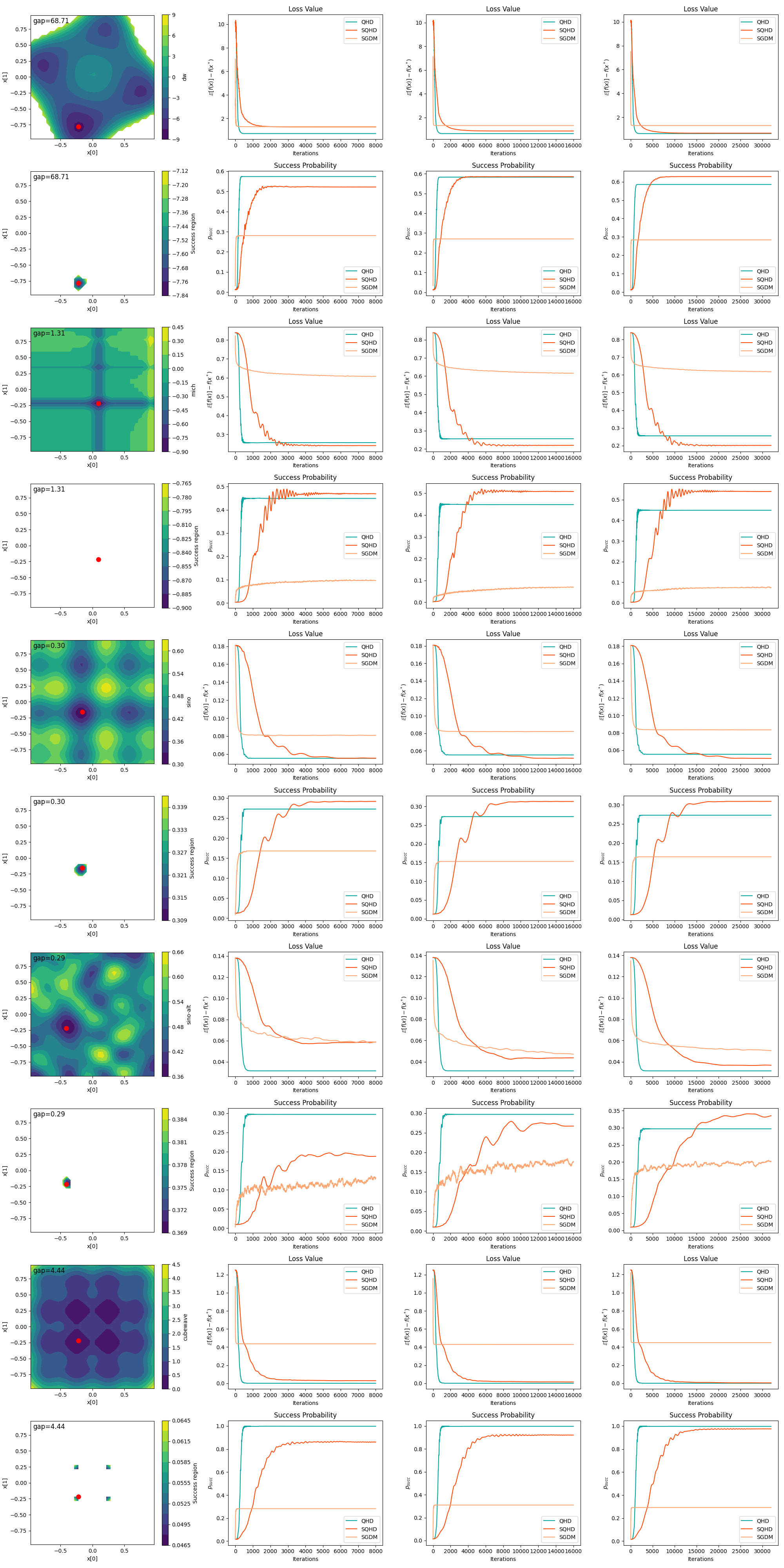}
    \caption{Results of SQHD, QHD, and SGDM on different objective functions varying with learning rate $\eta$. For each test function, the contour map and $\delta$-success region of the objective function are shown in the first column, and the second column shows the results with learning rate $\eta=0.01$, the third column for $\eta=0.005$, and the fourth column for $\eta=0.0025$.}
    \label{fig:rule-of-lr}
\end{figure}

\subsection{Rule of resolution}
We run QHD, SQHD, and SGDM with default settings except that the resolution is set to $n_r=32,128$. The results in Figure \ref{fig:rule-of-nres} show that the solution quality of SQHD is not affected by the change of resolution.
Nevertheless, SQHD shows a slower converging process at higher resolution.
\begin{figure}[hp]
    \centering
    \includegraphics[width=0.65\linewidth]{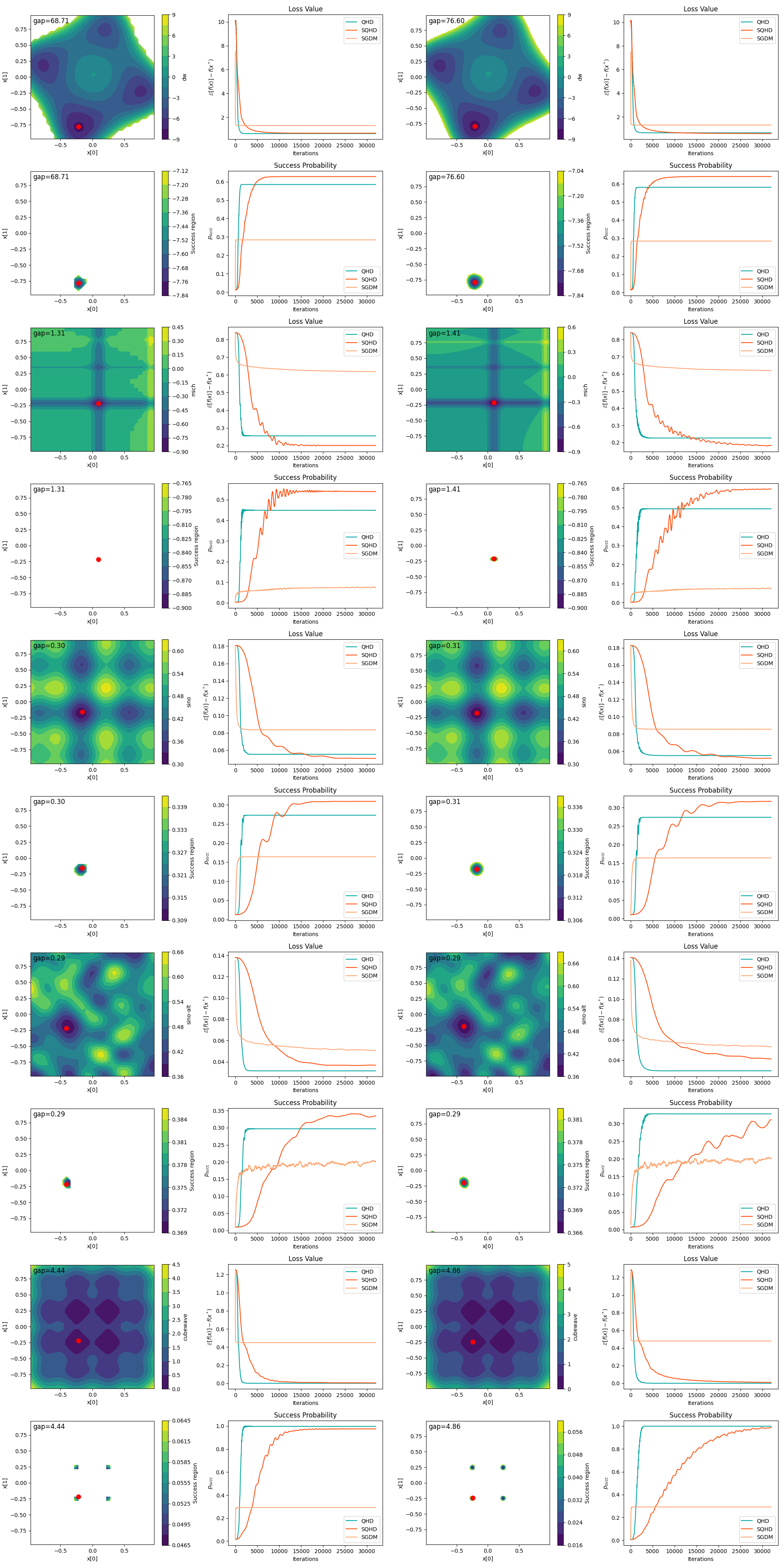}
    \caption{Results of SQHD, QHD, and SGDM on different objective functions varying with resolution $n_r$. For each test function,  the contour map and $\delta$-success region of the objective function are shown in the first column, and the second column shows the results. The last two columns show the same things, except that the resolution is $n_r=32$ for the first two columns and $n_r=128$ for the last two columns.}
    \label{fig:rule-of-nres}
\end{figure}

\subsection{General comparison}
\label{sec:general-comparison}
We compare the results of QHD, SQHD, and SGDM with default settings except for resolution $n_r=128$ and iteration number $N=32000$, and the results are shown in Figure \ref{fig:general-comparison}.
The results show that, overall, SQHD achieves comparable solution quality to QHD while incurring only a $1/m$ per-iteration computational cost, both outperforming SGDM in terms of solution quality.
One caveat is that under the default settings SQHD shows a slower convergence rate than QHD, which aligns with our theoretical expectations in the case of convex objective functions ($O(e^{-\beta(\int_0^t u(s)\d s)})$ versus $O(e^{-\beta(t)})$). For problems that require a long time to converge, SQHD’s advantage in per-iteration computational cost may be offset by its slower convergence rate. Identifying a separation between SQHD and QHD (as well as other classical continuous optimization algorithms) remains an open question.

\end{document}